\definecolor{linkColor}{RGB}{0, 128, 128}
\definecolor{citeColor}{RGB}{0, 112, 64}
\definecolor{urlColor}{RGB}{120, 0, 120}
\newcommand{\regN}{\mathsf{N}}
\newcommand{\regW}{\mathsf{W}}
\newcommand{\regMn}{\mathsf{M}}
\newcommand{\cH}{\mathcal{H}}
\newcommand{\bZ}{\mathbb{Z}}
\theoremstyle{plain}
\newtheorem{thm}{Theorem}
\newtheorem{lem}[thm]{Lemma}
\newtheorem{hyp}[thm]{Hypothesis}
\newtheorem{clm}[thm]{Claim}
\theoremstyle{definition}
\newtheorem{defn}{Definition}
\renewcommand{\>}{\rangle}
\newcommand{\<}{\langle}
\newcommand{\gapFr}[2]{{{}^{#1}\hspace{-1pt}\{\!\!\{#2\}\!\!\}}}
\newcommand{\fr}[1]{{\{\!\!\{#1\}\!\!\}}}
\newcommand{\fixFr}[2]{{[\![#1..#2]\!]}}
\newcommand{\iniSt}{{\pmb{0}}} 
\newcommand{\Nodes}{{\mathcal{V}}} 
\title{Non-trivial lower bound for 3-coloring \\ the ring
in the quantum LOCAL model
}
\author{Fran\c{c}ois Le Gall and Ansis Rosmanis
\\
\normalsize  Graduate School of Mathematics \\ 
\normalsize Nagoya University
}
\date{December 6, 2022}
\begin{document}

\maketitle

\begin{abstract}
We consider the LOCAL model of distributed computing, where in a single round of communication each node can send to each of its neighbors a message of an arbitrary size. It is know that, classically, the round complexity of 3-coloring an $n$-node ring is $\Theta(\log^*\!n)$. In the case where communication is quantum, only trivial bounds were known: at least some communication must take place.

We study distributed algorithms for coloring the ring that perform only a single round of one-way communication. Classically, such limited communication is already known to reduce the number of required colors from $\Theta(n)$, when there is no communication, to $\Theta(\log n)$. In this work, we show that the probability of any quantum single-round one-way distributed algorithm to output a proper $3$-coloring is exponentially small in $n$.
\end{abstract}

\section{Introduction}

Graph coloring is one of the most fundamental tasks considered in distributed computing (see~\cite{BarenboimElkinBook} for a recent introduction). The coloring of a ring is particularly well studied and understood. While it is easy to see that 2-coloring an $n$-node ring requires exactly $n/2$ rounds for an even $n$ and is impossible for an odd $n$, things are much more interesting for 3-coloring.
Cole and Vishkin presented a $\frac{1}{2}\log^*\!n+\mathrm{O}(1)$ round algorithm for 3-coloring~\cite{COLE198632,Goldberg88coloring}.
Linial~\cite{Linial92local} showed the Cole--Vishkin algorithm to be optimal for deterministic algorithms, and then Naor~\cite{NaorColoringLowerBound} extended that lower bound to randomized algorithms.

The complexity of graph coloring is much less understood in the quantum setting due to the difficulty of characterizing the power of quantum entanglement. While quantum entanglement enables non-local correlations, it by itself does not enable communication. This property is known as non-signaling, and Gavoille, Kosowski, and Markiewicz used it to show that quantum distributed algorithms for 2-coloring require at least $n/4$ rounds of communication~\cite{GKM2009}. As for 3-coloring, up to the best of our knowledge, prior to the current work no non-trivial bounds on the round complexity were known. Namely, it was clear that at least some communication must take place, because a non-communicating quantum node is no different from a non-communicating classical node, but no more than that was known.

We consider the LOCAL model introduced by Linial~\cite{Linial92local,peleg2000distributed}, where the bandwidth of communication links is unlimited. We additionally assume that the ring is directed and each node knows which of its two communication ports is for communicating with its successor, and which for its predecessor. One can consider the standard, two-way model, where the communication can go both along and against the direction of a directed link, and the one-way model, where the messages can be sent only in the direction of the links (see Figure~\ref{fig:2way1way}).

\DeclareRobustCommand{\myarrow}{\tikz{\draw [-latex](0,0) -- (0.7,0);\draw [opacity=0.2,white](0,-0.1) -- (0.7,0.1);}} 

\begin{figure}[!h]
\begin{centering}
\includegraphics[width=4cm]{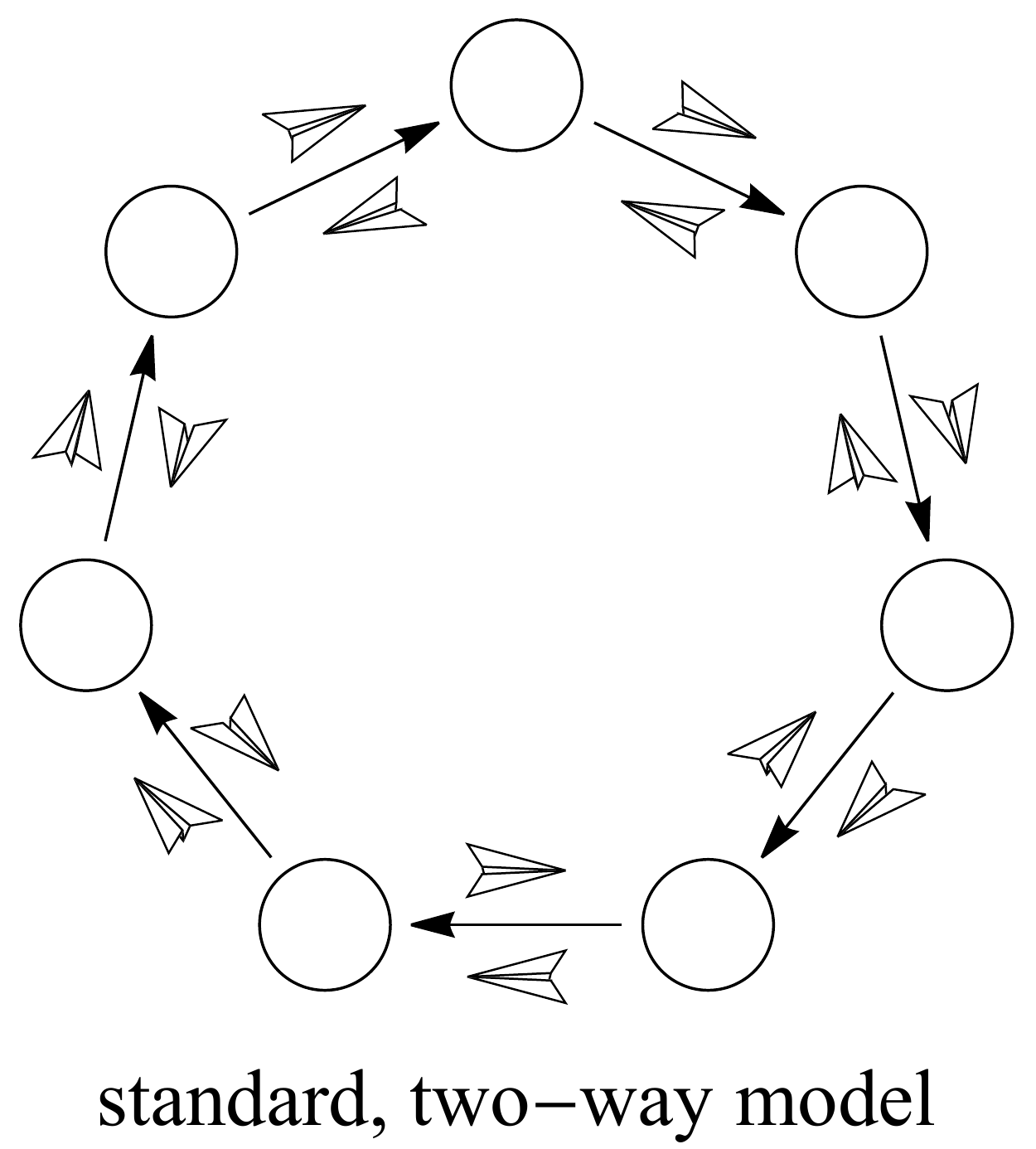}
\qquad\qquad
\includegraphics[width=4cm]{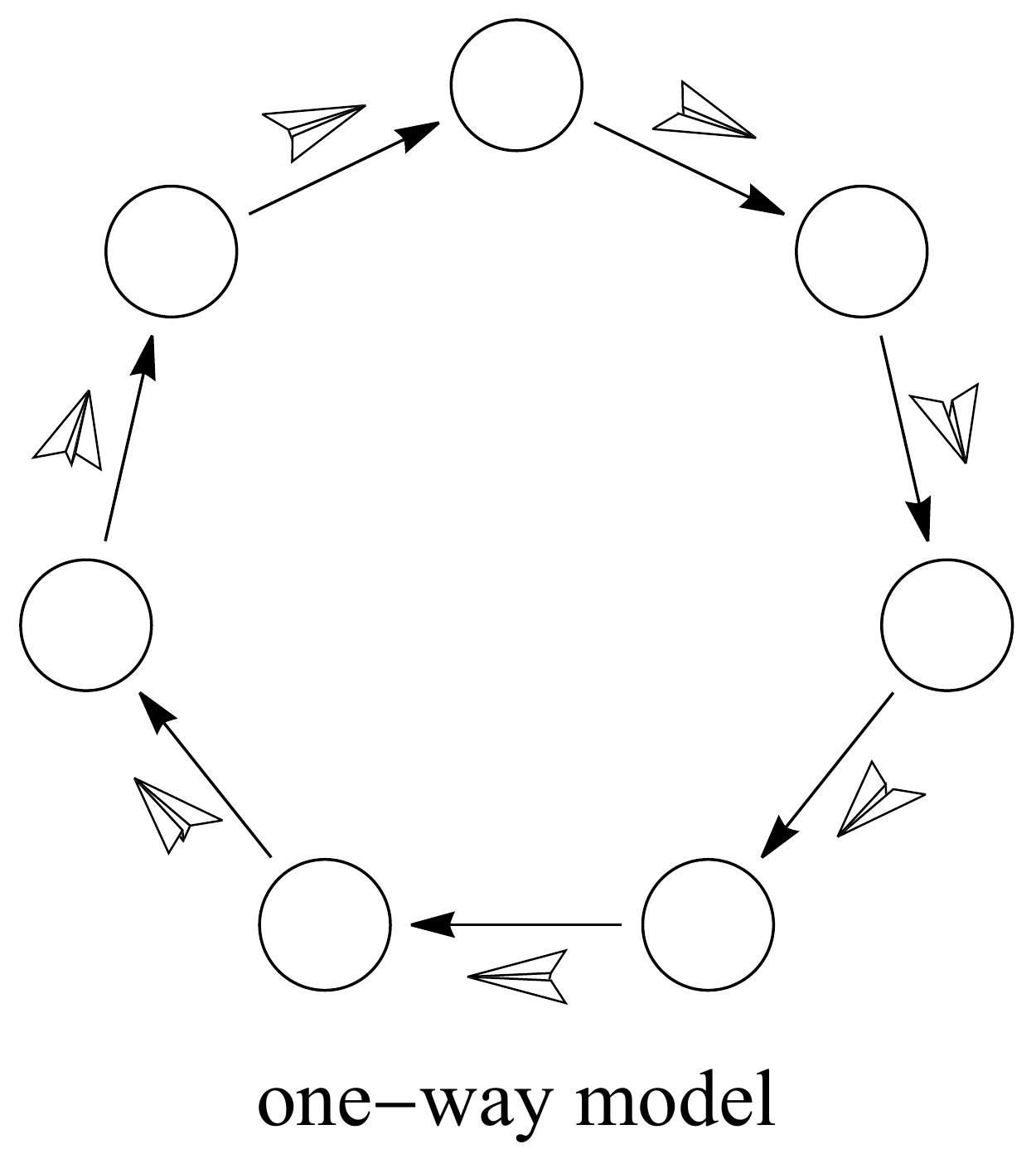}
\caption{\small
The standard and the one-way LOCAL models on the ring. The direction of links is indicated by arrows (\myarrow),
 while the direction of communication is indicated by the direction of the paper plane symbols.}
\label{fig:2way1way}
\end{centering}
\end{figure}

We can observe that, due to node's ability to forward its entire memory to its successor, every locally checkable labeling problem%
\footnote{Essentially, a problem for which the correctness of a solution can be checked in a constant number of rounds~\cite{NaorStockmeyer:LCL}. Coloring is clearly locally checkable.}
 that can be optimally solved in $r$ rounds in the one-way model can be solved in $\lceil r/2\rceil$ rounds in the standard model, and no faster. Thus, studying the one-way model can provide a more fine-grained characterization of the complexity of the problem. For example, $r$ rounds of the standard model enable $\mathrm{O}(\log^{(2r)}\!n)$-coloring, while $r$ rounds of the one-way model enable $\mathrm{O}(\log^{(r)}\!n)$-coloring.
This fine-graining also allows us to prove the first non-trivial hardness result for quantum distributed algorithms for $3$-coloring:

\begin{thm}
\label{thm:expBound}
The probability that a single-round one-way quantum distributed algorithm properly 3-colors an $n$-node ring is exponentially small in $n$; in particular, it is at most $(\frac{11}{15})^{\lfloor n/10\rfloor}$.
\end{thm}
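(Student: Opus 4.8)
The plan is to turn the one-way, single-round structure into a rigid causal constraint, write the success probability as the value of a product of \emph{commuting} edge-checks on a shared state, and then bound it by a product of $\lfloor n/10\rfloor$ local factors, each at most $\tfrac{11}{15}$.

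\emph{Set-up and causal structure.} Write the ring as $v_0,\dots,v_{n-1}$ with $v_i\to v_{i+1}$ (indices mod $n$). Model the algorithm by: a global pure state $|\psi\rangle$ on $A_0\otimes\cdots\otimes A_{n-1}$ shared in advance, with $v_i$ holding $A_i$; an isometry $V_i:A_i\to K_i\otimes M_i$ applied by $v_i$ depending only on $\mathrm{id}(v_i)$ — crucially the outgoing message $M_i$, sent to $v_{i+1}$, is committed \emph{before} $v_i$ receives $M_{i-1}$; and a PVM $\{P_i^c\}_{c\in\{1,2,3\}}$ applied by $v_i$ to $(K_i,M_{i-1})$, possibly depending on $\mathrm{id}(v_i)$ and on $\mathrm{id}(v_{i-1})$ read off the message, yielding the colour $c_i$. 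Two observations drive everything. First, pulled back through the isometries, $c_i$ is determined by $A_{i-1},A_i$ alone, so the joint law of the colours on any set of nodes is independent of how the ring closes, and — more importantly — a node learns nothing about its predecessor's colour beyond what leaks through one message that was itself produced obliviously to the message before it. Second, the edge-checks $E_i:=\sum_{a\neq b}P_{i-1}^a P_i^b$ are projectors that pairwise commute (they overlap only through the orthogonal PVM $\{P_i^c\}_c$). Hence, with $|\Phi\rangle:=\bigl(\bigotimes_i V_i\bigr)|\psi\rangle$, the probability of a proper colouring equals $\langle\Phi|\prod_i E_i|\Phi\rangle$.

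\emph{Randomising identifiers and telescoping.} It suffices to bound the success probability averaged over a uniformly random assignment of distinct identifiers; some assignment then attains the average. Partition the $n$ edges into $\lfloor n/10\rfloor$ consecutive windows of ten edges (a shorter leftover is bounded by $1$). By commutativity of the $E_i$,
\[
\langle\Phi|{\textstyle\prod_i}E_i|\Phi\rangle=\prod_{k=1}^{\lfloor n/10\rfloor}\Pr\bigl[\text{window }k\text{ satisfied}\ \big|\ \text{windows }1,\dots,k-1\text{ satisfied}\bigr],
\]
where conditioning on earlier windows yields some state $\sigma_k$ on the registers window $k$ touches; by the first observation of the previous paragraph, $\sigma_k$ does not depend on the identifiers inside window $k$, which therefore stay uniformly random given the conditioning. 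Thus it is enough to prove: for every protocol, every state $\sigma$ on the relevant registers, and uniformly random distinct identifiers on the ten-edge window, the probability that all ten of its edges are properly coloured is at most $\tfrac{11}{15}$.

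\emph{The single-window bound — and the main obstacle.} This is the heart of the argument, and its quantum half is the expected main obstacle. One may not argue that a short path is hard to $3$-colour — it is not; the point is that one fixed family of operations must cope with \emph{every} identifier pattern, while each node's colour depends only on the ``window'' $\bigl(\mathrm{id}(v_{i-1}),\mathrm{id}(v_i)\bigr)$ together with its slice of the entangled state, the outgoing message being oblivious to the incoming one. Over random identifiers the induced constraint structure is (a power of) a shift-graph-type graph whose fractional chromatic number strictly exceeds $3$, with a gap bounded away from $0$, which forces a constant-probability conflict within each window; the precise constant $\tfrac{11}{15}$ should come from optimising this ten-edge gadget. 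The genuinely quantum difficulty is to show that entanglement — of unbounded dimension, and, after conditioning, arbitrarily correlated with the rest of the ring — cannot beat this: here one must exploit precisely the one-way, single-round communication bottleneck (rather than any symmetry of the ring) together with a semidefinite, dimension-independent relaxation of the window's colouring game, of Lov\'asz-$\vartheta$ flavour, that upper-bounds its quantum value. Granting this, multiplying the $\lfloor n/10\rfloor$ window bounds into the telescoping identity gives $\Pr[\text{proper }3\text{-colouring}]\le(\tfrac{11}{15})^{\lfloor n/10\rfloor}$ on average over identifiers, hence on some instance, which is Theorem~\ref{thm:expBound}.
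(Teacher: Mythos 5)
There is a genuine gap --- in fact two. First, the step you yourself flag as ``the heart of the argument'' (that each ten-edge window is properly coloured with probability at most $\tfrac{11}{15}$) is not proven: you gesture at shift graphs, fractional chromatic numbers and a Lov\'asz-$\vartheta$-type relaxation, but no such argument is carried out, and it is not the route the paper takes. In the paper the constant $\tfrac{11}{15}$ is the optimal value of an explicit linear program (\ref{eqn:LPFullLine}) over all colourings of a $9$-node line segment that are non-signaling beyond distance $1$, solved symbolically with matching primal and dual certificates (Section~\ref{sec:nonsig9nodeLine}); there is no known closed-form or structural derivation of it. Since everything after ``Granting this'' depends on that bound, the proposal does not establish the theorem.

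Second, your telescoping step is unsound in the model you set up. You allow a pre-shared entangled state $|\psi\rangle$ and unique identifiers; but then, after conditioning on windows $1,\dots,k-1$ being satisfied, the state $\sigma_k$ on window $k$'s registers can be entangled with the rest of the ring and need not be one for which any fixed per-window bound applies --- you would need the $\tfrac{11}{15}$ bound to hold for \emph{arbitrary} states on the window's registers, which is a strictly stronger statement than a bound for the states actually produced by the protocol. The paper avoids conditioning altogether: in its model there are no identifiers and no shared entanglement, the initial state is the product state $|\iniSt\>^{\otimes n}$, and Lemma~\ref{lem:QuantumNonSignal} shows (by pulling the measurement operators back through the circuit) that the output colouring is \emph{independent} beyond distance $1$, so the probabilities that $\lfloor n/10\rfloor$ disjoint gap-$1$ frames of $9$ nodes are each proper multiply exactly (Lemma~\ref{lem:SegmentsToExp}). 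Indeed, Section~\ref{sec:FutureWork} explicitly notes that with identifiers and shared entanglement one only gets non-signaling (not independent) colourings, and that elevating the constant-segment bound to an exponential bound in that stronger model is an open problem --- which is precisely the step your telescoping tries to do for free.
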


The inspiration for our result is the observation that the uniform distribution over all proper colorings exhibits some very long-range correlations, which we detail below in Section~\ref{sec:introUniform}. We further observe that all proper colorings exhibit a certain correlation between the distance between two randomly chosen nodes and whether these nodes have the same color. We sketch this latter observation for an 11-node ring as an example in Section~\ref{intro:n11}, and we note that this observation leads to a weaker version of Theorem~\ref{thm:expBound} having the bound $(\frac{1382}{1383})^{\lfloor n/16\rfloor}$ instead of $(\frac{11}{15})^{\lfloor n/10\rfloor}$. 
Improving the base of the exponent from $(\frac{1382}{1383})^{1/16}$ to $(\frac{11}{15})^{1/10}$ requires a more thorough inspection of random colorings that may arise from quantum distributed algorithms. Here, in Section~\ref{sec:introLP}, we briefly outline that approach, as well as describe how it might help to address the case of multiple-round distributed algorithms in the future.

\subsection{Uniform coloring requires global communication}
\label{sec:introUniform}

As discussed in the works by Gavoille, Kosowski, and Markiewicz~\cite{GKM2009} and Arfaoui and Fraigniaud \cite{AF2color}, the non-signaling arguments, which are used to prove the hardness of 2-coloring in the quantum case, do not seem to be able to generalize the classical $\Omega(\log^*\!n)$ lower bound for 3-coloring~\cite{Linial92local,NaorColoringLowerBound} to the quantum case. To formally confirm the correctness of such an intuition, one would have to answer the following question in the affirmative, and trying to do exactly that was the starting point of our research. The raised question is:

\bigskip

\noindent
\emph{Can one construct a probability distribution over proper $3$-colorings of the $n$-node ring that is non-signaling beyond distance $o(\log^*\!n)$?}%
\footnote{Informally, a distribution is non-signaling beyond distance $r$ if, for any collection of ring segments, the marginal distributions over colorings of these segments do not depend on the mutual distances between the segments, as long as all these distances are more than $r$.}

\bigskip

\noindent
Our first attempt at such a construction was the uniform distribution over all proper 3-colorings.
However, as the following argument shows, the uniform distribution is highly non-local: if one knows exactly the probability that two given nodes have the same color, one can determine the distance $d\in\{1,2,\ldots,\lfloor n/2\rfloor\}$ between them.

Consider a line segment of length $\ell$, consisting of $\ell+1$ nodes connected by $\ell$ links. It is easy to see that there are $3\cdot 2^\ell$ proper colorings of this line segment. Let $a_\ell$ and $b_\ell$ denote the number of such proper colorings that, respectively, have and have not both endpoints colored in the same color. By solving a simple recurrence relation $a_0 = 3$, $b_0= 0$, and $a_\ell = b_{\ell-1}$ and $b_\ell = 2a_{\ell-1} + b_{\ell-1}$ for $\ell\ge 1$, one can see that
\[
a_\ell = 2^\ell+2(-1)^\ell
\quad\text{and}\quad
b_\ell = 2(2^\ell-(-1)^\ell).
\]
A properly colored $n$-node ring can be thought of as a properly colored line segment of length $n$ with both endpoints being colored in the same color. Hence, there are $a_n$ proper colorings of such a ring, and the probability that two given nodes at distance $d$ have the same color is
\[
\frac{a_da_{n-d}}{3a_n}
=
\frac{(2^d+2(-1)^d)(2^{n-d}+2(-1)^{n-d})}{3(2^n+2(-1)^n)}
=
\frac{1}{3}+ \frac{2(-1)^n}{3}\cdot
\frac{1 + (-2)^{d}+(-2)^{n-d}}{2^n+2(-1)^n},
\]
which is distinct for every $d$.

As a result, using non-signaling arguments along the same lines as in~\cite{GKM2009}, it can be seen that achieving the uniform distribution over all proper 3-colorings is no faster than 2-coloring. It therefore also follows that the distribution over proper 3-colorings that results from uniformly at random assigning unique labels to nodes and then running the Cole--Vishkin algorithm is itself not uniform. We show in Section~\ref{sec:IndAndNSDistr} that distributions that are non-signaling beyond a given distance $r$ must satisfy certain linear constraints, which can then be incorporated in a linear program to bound the maximum probability for quantum distributed algorithms to properly 3-color the ring.

\subsection{Impossibility of perfect coloring}
\label{intro:n11}

As we discovered, looking for correlations between distances among nodes and whether those nodes have the same color not only establishes the globality of the uniform distribution, but also leads to certain no-go results for arbitrary distributions. Our proof that the probability of a one-round algorithm to output a proper 3-coloring is at most $(\frac{1382}{1383})^{\lfloor n/16\rfloor}$ is essentially a generalization of the following observation, which right away implies that it is impossible for a quantum single-round one-way distributed algorithm to perfectly  color (i.e., without any error) an $11$-node ring.

\bigskip

Given a coloring of the ring of $11$ nodes, consider two experiments: Experiment $1$ and Experiment $2$. For $e\in\{1,2\}$, Experiment $e$ starts by first randomly choosing the distance $d_e$, where
\[
d_1= \begin{cases}
 2 & \text{w.p. }\frac{30}{41}, \\
 3 & \text{w.p. }\frac{11}{41}, \\
 \end{cases}
\quad\text{and}\quad
d_2= \begin{cases}
 4 & \text{w.p. }\frac{14}{41}, \\
 5 & \text{w.p. }\frac{27}{41}. \\
 \end{cases}
\]
Then the experiment chooses uniformly at random one of $11$ nodes, and compares its color with the node distance $d_e$ ahead of if, outputting \texttt{Match} if the colors match and \texttt{Differ} if they do not.

Let us consider running these experiments on proper colorings. There are $a_{11}=2046$ such colorings, so a simple computer program (see Appendix~\ref{app:code11}) can quickly check for all of them what is the probability of Experiments 1 and 2 each returning \texttt{Match}. As it turns out, for every proper coloring, the probability of Experiment 1 returning \texttt{Match} is 
 at least by $\frac{1}{451}$
 greater that the probability of Experiment 2 returning \texttt{Match}.

Now suppose we had a quantum single-round one-way distributed algorithm that perfectly  $3$-colors an $11$-node ring, producing some probability distribution over proper colorings. 
Due to non-causality, every two non-adjacent nodes cannot tell what is the distance between them. Hence, the probability of them returning the same color must be independent from the distance between them.
However, the observation above implies that it is impossible.
More quantitatively, in Appendix~\ref{sec:boundsN11N22} we show that it implies that the success probability is at most $\frac{451}{452}$. If we additionally analyze Experiments 1 and 2 on improper colorings, we can improve this bound to $\frac{244}{245}$.

\bigskip

To prove the exponential bound $(\frac{1382}{1383})^{\lfloor n/16\rfloor}$, we have to use a similar observation for line segments. In addition, we have to consider a more limited set of distributions than non-signaling distribution, which nonetheless includes all the distributions resulting from quantum distributed algorithms.

\subsection{Linear program for bounding success probability}
\label{sec:introLP}

Probability distributions resulting from quantum $r$-round one-way distributed algorithms are non-signaling beyond distance $r$. 
Hence, the maximum among success probabilities of non-signaling distributions---the total probability weight they place on proper colorings---upper-bounds the maximum success probability of quantum distributed algorithms.

To obtain stricter bounds on the success probability than the ones obtained along the lines of comparing Experiments 1 and 2, we can exploit the fact that non-signaling distributions have to satisfy much more constrains than that the probability of two distant nodes having the same color must be independent from the distance between the nodes.
In particular, we show how to express the maximum success probability of non-signaling distributions beyond distance $r$ as a linear program. As a result, for example, for $n=11$ and a single round, $r=1$, we show that the success probability is at most $\frac{32}{63}$, highly improving upon the bound $\frac{244}{245}$ described above.

To prove Theorem~\ref{thm:expBound}, we further have to consider line segments and non-signaling distributions over their colorings. The maximum among their success probabilities can also be expressed as a linear program.

While it might be possible that this linear program shows impossibility of perfect coloring for quantum one-way distributed algorithms of more than one round, computationally solving the linear program of size large enough to witness this impossibility for $r$ much larger than $1$ is infeasible: one has to consider the number of nodes that is at least the inverse function of $\log^*$ in $r$, which is astronomically large.

Nevertheless, we hope that a closer inspection of impossibility to perfectly color line segments---in particular, the inspection of optimal primal and dual solutions of the linear program---could lead to a technique similar to the round elimination~\cite{Brand2019RoundElem}, which is the current go-to theorem for showing the $\Omega(\log^*\!n)$ lower bound for classical randomized distributed algorithms (see \cite{SuomelaDA2020} for a pedagogical description of the technique).

\subsection{Organization of the paper}

The paper is organized as follows. In Section~\ref{sec:prelim}, we formalize some aspects of the problem of coloring the ring and we informally introduce the model of quantum distributed algorithms, leaving the formal definition of the model to the appendix. In Section~\ref{sec:IndAndNSDistr}, we define independent and non-signaling random colorings, and provide a framework for obtaining exponential bounds on the success probability. 
In Section~\ref{sec:SuccessProbLP}, we study the linear program for the maximum success probability of non-signaling random colorings of the ring and line segments. Using Wolfram Mathematica, we symbolically solve the linear program for certain cases when the number of nodes is small. These solutions together with the framework introduced in Section~\ref{sec:IndAndNSDistr} yield our main result, Theorem~\ref{thm:expBound}.
Independent random colorings are more constrained that non-signaling colorings, and, in Section~\ref{sec:independent:n4}, we use them to prove that even a 4-node ring cannot be colored perfectly in a single one-way round. However, the optimal bounds on the success probability of independent random colorings do not seem to be expressible as linear programs, prohibiting us from obtaining similar results for larger number of nodes. Finally, in Section~\ref{sec:FutureWork}, we talk about potential approaches towards proving impossibility results beyond a single round of communication and other future research.

To make the paper more accessible to readers without a background in quantum computing, aside from a brief description of quantum distributed algorithms in Section~\ref{sec:prelim}, we keep the main body of the text free from details of quantum computation. We describe the model of quantum distributed computation in more detail in Appendix~\ref{app:quantum}, where we also prove Lemma~\ref{lem:QuantumNonSignal} stating that the random colorings returned by such computation are independent at distances beyond the number of rounds of communication. 
In Appendix~\ref{sec:No3Coloring}, we observe certain correlations between the distances among the nodes and whether they have the same color  in proper colorings of the ring and line segments. These observations together with the framework introduced in Section~\ref{sec:IndAndNSDistr} yield the bound $(\frac{1382}{1383})^{\lfloor n/16\rfloor}$, which is weaker than Theorem~\ref{thm:expBound}, but the observations themselves might be of an independent interest in combinatorics.
In Appendix~\ref{app:code11}, we provide and annotate the Wolfram Mathematica code that we have used for analyzing the example of two experiments for the 11-node ring described in Section~\ref{intro:n11}. This and the rest of the Wolfram Mathematica code used for our calculations is available online~\cite{code3Color}.

\section{Preliminaries}
\label{sec:prelim}

Let us start by introducing terminology and mathematical formalism that we use when discussing the directed ring.

\begin{defn}
An \emph{$n$-node directed ring} is a directed degree-$1$ graph with the set of vertices $\Nodes:=\mathbb{Z}_n=\{0,1,\ldots,n-1\}$ and the set of arcs $\{(v,v+1)\colon v\in \Nodes\}$. 
We call $v\in \Nodes$ a \emph{node} and we also refer to $v$ as the \emph{number} of that node.  We call arcs of the graph \emph{(communication)} \emph{links}. We say that $v-1$ \emph{precedes} $v$, $v+1$ \emph{succeeds} or \emph{follows} $v$, and that they are both \emph{adjacent} to $v$. We also call $v+1$ the \emph{next} node of $v$.
\end{defn}

\noindent
In the above definition and throughout the paper, for a node $v$ and any integer $k$, $v+k$ denotes the node number $v+k\mod n$.
Given $v\in \Nodes$ and $D\subseteq\mathbb{Z}_n$, let
\( v+D := \{v+d\colon d\in D\}\subseteq \Nodes.\)

Let $\Sigma$ be some fixed finite set that we call the \emph{set of colors}. We call the function $\phi\colon \Nodes\rightarrow \Sigma$ a \emph{coloring}, and we say that $\phi$ is \emph{proper} if $\phi(v)\ne\phi(v+1)$ for all $v\in \Nodes$. For brevity, we may also denote the set of all $|\Sigma|^n$ colorings by $\Sigma^\Nodes$. We also consider colorings of segments of the ring, and their properness is defined analogously: every pair of adjacent nodes must have different colors.

For a tuple $(a_1,\ldots,a_\ell)$ of distinct elements, let $\mathsf{set}(a_1,\ldots,a_\ell):=\{a_1,\ldots,a_\ell\}$.

\paragraph{Computational model.}

Here we describe the computational model of quantum distributed algorithm on the directed ring. We leave the precise definition of the model to Appendix~\ref{app:quantum}, where we provide the proof of Lemma~\ref{lem:QuantumNonSignal}, which is a slight adaptation%
\footnote{Here we take into account the one-wayness of the communication, and we also consider a special case of non-signaling random colorings that we call independent colorings.} 
 of the folklore fact that random colorings resulting from quantum distributed algorithms are non-signaling. Aside from establishing Lemma~\ref{lem:QuantumNonSignal}, no other aspects of quantum computation are used.

Each node operates its own quantum processor, executing exactly the same local algorithm as all the other nodes. In addition, every node can receive quantum messages from the node preceding it and send quantum messages to the node succeeding it. This circular communication proceeds in synchronous rounds. We are considering the LOCAL model~\cite{Linial92local}, where there is no bandwidth limitations on the size of messages.
At the end of the computation, after $r$ rounds, each node outputs a color from $\Sigma$, and we treat the corresponding mapping $\Phi$ from $\Nodes$ to $\Sigma$ as the coloring.
Due to the probabilistic nature of quantum computation, the output $\Phi$ is probabilistic, and we use the capital letter $\Phi$ instead of $\phi$ to emphasize that we think of coloring $\Phi$ as a random variable taking values in $\Sigma^\Nodes$.

\section{Independent and Non-Signaling Random Colorings}
\label{sec:IndAndNSDistr}

Arfaoui and Fraigniaud~\cite{AF2color} used the non-signaling property of quantum computation when analyzing, along the lines of Ref.~\cite{GKM2009}, the hardness of $2$-coloring the ring for quantum distributed algorithms. Here we define \emph{non-signaling colorings} and their special case that we call \emph{independent colorings}. Our definition of non-signaling colorings slightly differs from the definition of non-signaling distributions by Arfaoui and Fraigniaud. First of all, our definition is specific to the ring graph, not a general graph, and we have also made adaptations arising from the communication being one-way. In addition, in the distributed model that we consider, nodes do not receive unique identifiers at the beginning of the computation. However, aside from adaptations due to the communication being one-way, our definition would match that of Arfaoui and Fraigniaud if they assigned the unique identifiers to the nodes uniformly at random.

We consider colorings \(\Phi\) of the whole ring and colorings $\Psi$ of segments of the ring, both of which are random variables. Definitions and lemmas established here will let us prove our main result, Theorem~\ref{thm:expBound}, in Section~\ref{sec:SuccessProbLP}.

\subsection{Frames, tableaux, and their collections}

Let $s\in\{1,2,\ldots,n-1\}$.
We call the intervals of integers
\[
\fr{s}:=(0,1,\ldots,s-1)
\qquad\text{and}\qquad
\gapFr{r}{s}:=(-r,-r+1,\ldots,s-1),
\]
a \emph{sliding frame} and a \emph{gapped sliding frame}, respectively.
See Figure~\ref{fig:FramesOnRing} for illustrations.
We refer to $s$ as the \emph{length} of the frame and to $r$ as the length (or width) of the \emph{gap}.
For nodes $v$ and $v'=v+s-1$, we call the interval of nodes
\[
\fixFr{v}{v'}:=(v,v+1,\ldots,v')
\]
a \emph{fixed frame} of length $s$. Given a vertex $v$, let $v+\fr{s}:=\fixFr{v}{v+s-1}$ and $v+\gapFr{r}{s}:=\fixFr{v-r}{v+s-1}$ 
(here we assume $s+r\le n-1$).

Intuitively, the goal of the above definition is to ensure the following regarding the operation of $r$-round distributed algorithms. If $v+\gapFr{r}{s}$ and $v'+\gapFr{r}{s'}$ do not overlap, then colors output by $v+\fr{s}$ and $v'+\fr{s'}$ should be independent. We can think of a gap as a buffer that keeps (non-gapped) sliding frames sufficiently apart.

\begin{figure}[!h]
\begin{centering}
\includegraphics[width=6cm]{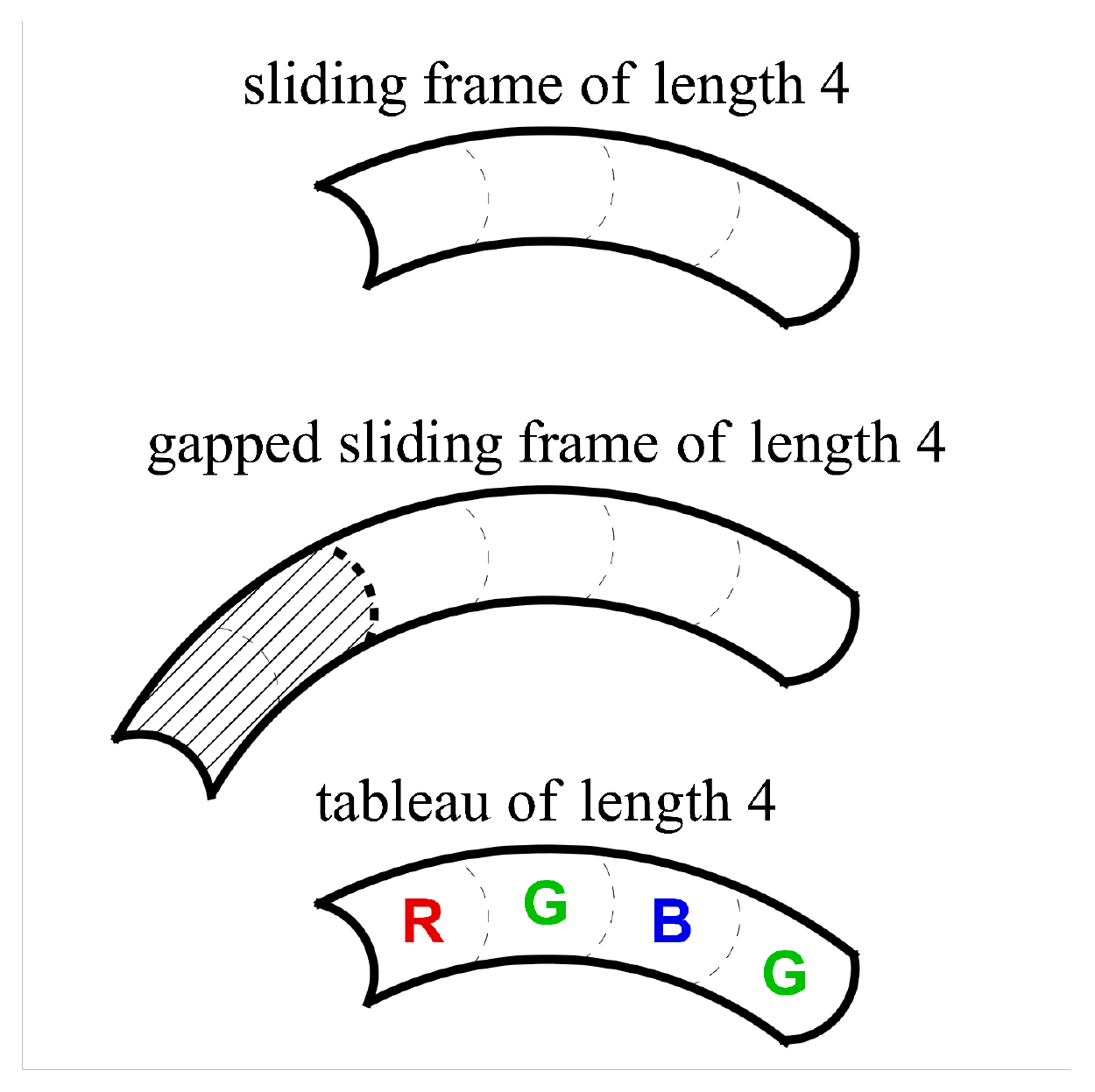}
\includegraphics[width=6cm]{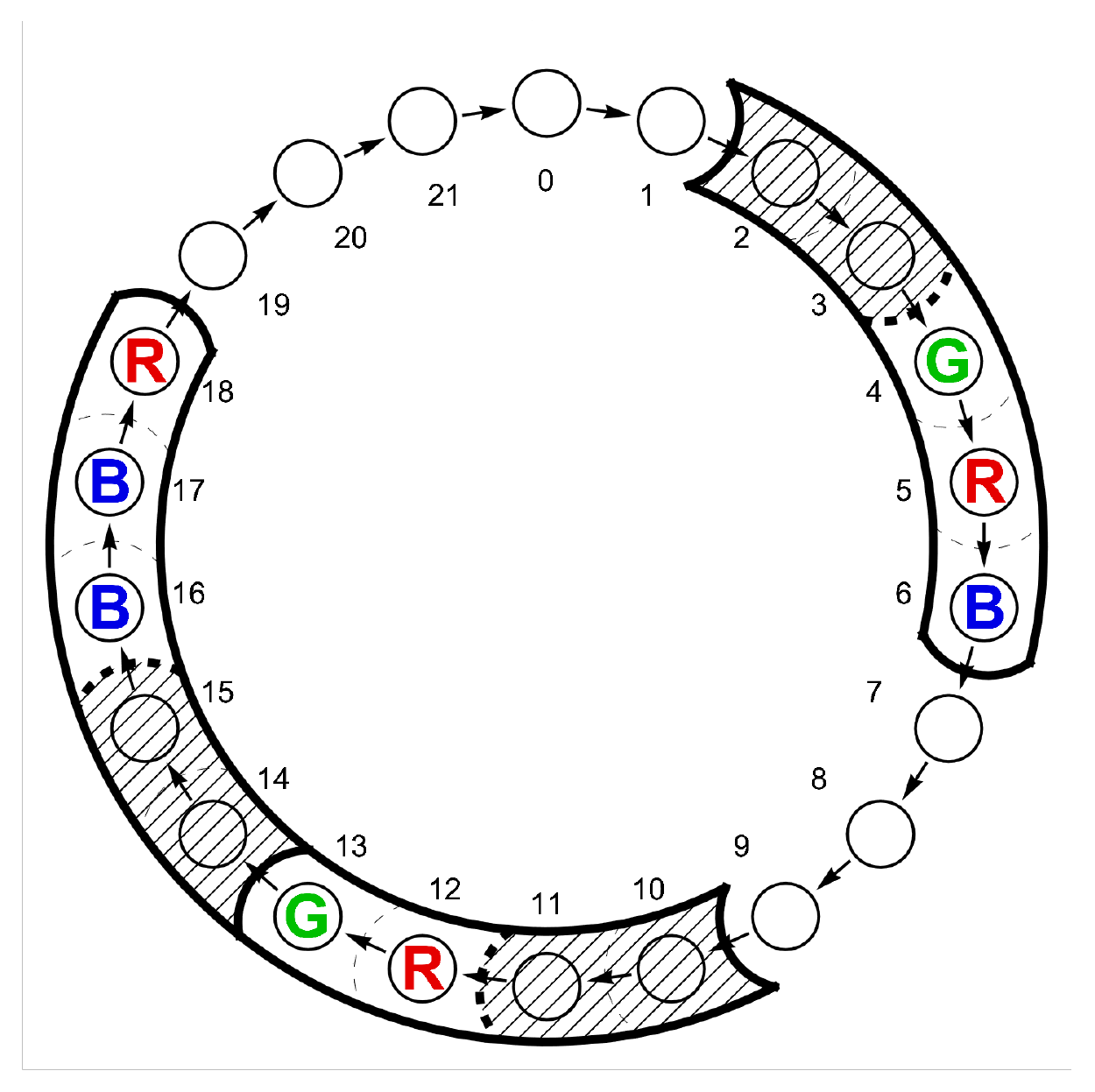}
\caption{\small
On the left, a sliding frame, a gap-$2$ gapped sliding frame, and a tableau, all of length $4$, are illustrated. Here $\Sigma=\{R,G,B\}$. We can think of gapped sliding frames as of something that we can slide circularly on top of the ring (including jumping over one another), and we need to place them so that they do not overlap. On the left, the ring of length $n=22$, on which the collection $F=(\fr{3},\fr{3},\fr{2})$ of frames is placed by the gap-$2$ placement $\omega=(4,16,12)$. Assuming there is some underlying coloring $\phi$ of the ring, we have revealed the colors corresponding to the sliding frames, the $(3,3,2)$-collection of tableaux in this case being $((G,R,B),(B,B,R),(R,G))$.}
\label{fig:FramesOnRing}
\end{centering}
\end{figure}

 Given a coloring $\phi$, let $\phi(\fixFr{v}{v'})$ be short for $(\phi(v),\phi(v+1),\ldots,\phi(v'))\in\Sigma^s$, and we call such a tuple of colors a \emph{tableau}.
We note that we \emph{cannot} think of $\phi(\fixFr{v}{v'})$ as the partial function that is the restriction of $\phi$ to the subdomain $\mathsf{set}\fixFr{v}{v'}$. 
In particular, we may have $\phi(\fixFr{v}{v'})=\phi(\fixFr{v+\delta}{v'+\delta})$ for some ``shift'' $\delta\in\bZ$.

\paragraph{Collections of frames and tableaux.}

We call a $t$-tuple $(\fr{s_1},\ldots,\fr{s_t})$ a \emph{collection of sliding frames}.
Given a collection of sliding frames $F=(\fr{s_1},\ldots,\fr{s_t})$, we say that $F$ is \emph{gap-$r$-placable} if $\sum_{j=1}^t (s_j+r)\le n$. We say that a tuple  $\omega\in \Nodes^t$ is a \emph{gap-$r$ placement} of $F$ if for all distinct $j,j'\in\{1,\ldots,t\}$ we have
\[
\mathsf{set}\big(\omega_j+\gapFr{r}{s_j}\big) \cap \mathsf{set}\big(\omega_{j'}+\gapFr{r}{s_{j'}}\big) = \emptyset.
\]

Given a collection of sliding frames $F=(\fr{s_1},\ldots,\fr{s_t})$ and a gap-$r$ placement $\omega$ of $F$, let
\[
\omega+F:=
\big(
\omega_1+\fr{s_1},\,
\ldots,\,
\omega_t+\fr{s_t}
\big).
\]
For a coloring $\phi$, let
\[
\phi(\omega+F):=
\big(
\phi(\omega_1+\fr{s_1}),\,
\ldots,\,
\phi(\omega_t+\fr{s_t})
\big),
\]
which is a $t$ tuple of tableaux, with $j$-th tableau having length $s_j$.
We call such a tuple an $(s_1,\ldots,s_t)$-collection of tableaux.
We note that the order of tableaux in a collection is fixed. For example, given $\Sigma=\{R,G,B\}$, collections $((R,G),(G,B))$ and $((G,B),(R,G))$ are not the same. 

\subsection{Colorings of the ring}

We consider random variables $\Phi$ whose values are colorings $\phi\colon \Nodes\rightarrow\Sigma$, and we refer to $\Phi$ as a \emph{random coloring}, or we might still just simply refer to it as a coloring. Let $\mathsf{Prev}\colon \Nodes\rightarrow \Nodes\colon v\rightarrow v-1$.
We call a random coloring $\Phi$ \emph{cyclic} if 
\[
\Pr[\Phi = \phi\circ\mathsf{Prev}] = \Pr[\Phi = \phi],
\]
where $\circ$ denotes the composition.

Given a random coloring $\Phi$, we define the random variable $\Phi({\omega+F})$ accordingly:
\[
\Pr[\Phi(\omega+F)=(\zeta_1,\ldots,\zeta_t)]
:=
\Pr[\Phi=\phi\text{ such that }\phi(\omega+F)=(\zeta_1,\ldots,\zeta_t)].
\]
We can think of $\Phi(\omega+F)$ inducing a marginal distribution over $(\zeta_1,\ldots,\zeta_t)$ in the natural way.

\begin{defn}
\label{def:independentColoring}
We say that a random coloring $\Phi$ is \emph{independent at distances beyond $r$} if it is cyclic and if for every $s\ge 1$ there exists a probability distribution $p_s$ over $\Sigma^s$ such that,
for every collection $F=(\fr{s_1},\ldots,\fr{s_t})$ of sliding frames, a gap-$r$ placement $\omega=(\omega_1,\ldots,\omega_t)$ of $F$, 
and $\zeta_j\in\Sigma^{s_j}$ for every $j$,
we have 
\[
\Pr[\Phi(\omega+F)=(\zeta_1,\ldots,\zeta_t)]
=
p_{s_1}(\zeta_1) \cdot \ldots \cdot p_{s_t}(\zeta_t).
\]
\end{defn}

\begin{defn}
\label{def:nonsignalingColoring}
We call a random coloring $\Phi$  \emph{non-signaling at distances beyond $r$} if it is cyclic and if, for every gap-$r$-placable collection of frames $F$, the marginal distribution induced by $\Phi(\omega+F)$ is the same for all gap-$r$ placements $\omega$ of $F$.
\end{defn} 

Note that, if $\Phi$ is non-signaling at distances beyond $r_1$, then it is also non-signaling at distances beyond $r_2$ for $r_2>r_1$.
The following directly follows from Definitions~\ref{def:independentColoring} and \ref{def:nonsignalingColoring}.

\begin{clm}
\label{clm:IndDoesntSignal}
A random coloring $\Phi$ that is independent at distances beyond $r$ is also non-signaling at distances beyond $r$.
\end{clm}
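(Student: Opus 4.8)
The plan is to unfold both definitions and observe that the independence condition of Definition~\ref{def:independentColoring} is strictly stronger than---and immediately implies---the marginal-invariance condition of Definition~\ref{def:nonsignalingColoring}. Concretely, suppose $\Phi$ is independent at distances beyond $r$. First note that cyclicity is part of both definitions, so that requirement carries over verbatim; it remains to check the non-signaling marginal condition. Fix any gap-$r$-placable collection of frames $F=(\fr{s_1},\ldots,\fr{s_t})$, and let $\omega$ and $\omega'$ be two gap-$r$ placements of $F$. By Definition~\ref{def:independentColoring}, there is a single family of distributions $p_{s_1},\ldots,p_{s_t}$ (depending only on the lengths $s_1,\ldots,s_t$, not on the placement) such that, for every choice of tableaux $\zeta_j\in\Sigma^{s_j}$,
\[
\Pr[\Phi(\omega+F)=(\zeta_1,\ldots,\zeta_t)]
=
p_{s_1}(\zeta_1)\cdot\ldots\cdot p_{s_t}(\zeta_t)
=
\Pr[\Phi(\omega'+F)=(\zeta_1,\ldots,\zeta_t)].
\]
Hence the marginal distribution induced by $\Phi(\omega+F)$ equals that induced by $\Phi(\omega'+F)$, which is exactly the assertion that $\Phi$ is non-signaling at distances beyond $r$.

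The only subtlety I would be careful about is purely bookkeeping: one must confirm that a gap-$r$-placable collection $F$ (i.e.\ $\sum_j (s_j+r)\le n$) actually admits at least one gap-$r$ placement, and that the sets $\mathsf{set}(\omega_j+\gapFr{r}{s_j})$ fit on the ring so that the probabilities $\Pr[\Phi(\omega+F)=(\zeta_1,\ldots,\zeta_t)]$ referenced in Definition~\ref{def:independentColoring} are well-defined; but Definition~\ref{def:independentColoring} is stated for an arbitrary gap-$r$ placement $\omega$, so whenever the non-signaling condition quantifies over placements $\omega$, the independence hypothesis applies to each of them. There is genuinely no obstacle here---the statement is essentially an unpacking of definitions---so the ``main obstacle'' is only to phrase the implication cleanly, taking care that the product formula holds for \emph{all} placements with the \emph{same} $p_s$'s, which is precisely what forces the marginals to coincide across placements.

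I would therefore write the proof as a two-sentence argument: cyclicity is immediate, and for any gap-$r$-placable $F$ the common product form $\prod_{j} p_{s_j}(\zeta_j)$ shows the marginal of $\Phi(\omega+F)$ is independent of the placement $\omega$, which is the definition of non-signaling beyond $r$. \qed
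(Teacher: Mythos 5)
Your proof is correct and matches the paper's treatment: the paper gives no explicit proof, stating only that the claim ``directly follows'' from Definitions~\ref{def:independentColoring} and~\ref{def:nonsignalingColoring}, and your unpacking---cyclicity carries over, and the placement-independent product form $\prod_j p_{s_j}(\zeta_j)$ forces the marginals of $\Phi(\omega+F)$ and $\Phi(\omega'+F)$ to coincide---is exactly the intended argument.
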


We note that the reverse of Claim~\ref{clm:IndDoesntSignal} is not true: there exist colorings that are non-signaling beyond some distance but that are not independent beyond that same distance. For example, suppose $|\Sigma|=n$, and consider $\Phi$ that is a uniform distribution over all $n!$ colorings $\phi$ that assign a unique color from $\Sigma$ to every node. $\Phi$ is non-signaling already beyond $r=0$, but it becomes independent only beyond $r\ge\lfloor 
n/2\rfloor$.%
\footnote{For $r\ge\lfloor n/2\rfloor$, only collections consisting of a single sliding frame are gap-$r$-placable.}

In Appendix~\ref{app:quantum}, we formalize the model of quantum distributed algorithms, adapted for one-way communication on the ring, and we show that the output distributions of quantum $r$-round one-way distributed algorithms are independent at distances beyond $r$.

\begin{lem}
\label{lem:QuantumNonSignal}
The coloring $\Phi$ produced by a quantum one-way $r$-round distributed algorithm is independent at distances beyond $r$.
\end{lem}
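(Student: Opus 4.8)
The plan is to set up the quantum distributed computation as a sequence of local unitaries interleaved with message-passing, and then show that after $r$ rounds the reduced state (and hence the output distribution) on any non-gap-$r$-overlapping collection of sliding frames factorizes. First I would formalize the model in Appendix~\ref{app:quantum}: each node $v$ holds a local register, initialized identically (since nodes are anonymous and run the same algorithm), and in each round applies a fixed local unitary, then sends a (quantum) message to its successor $v+1$ along the arc. Because communication is one-way, after $k$ rounds the state of node $v$'s registers can only depend on the initial states of nodes $v-k, v-k+1, \ldots, v$ — this is the \emph{light-cone} argument. Crucially, the gap of width $r$ in $\gapFr{r}{s}$ is exactly the backward reach of the light cone: the output of $v+\fr{s}$ after $r$ rounds is a function (via a quantum measurement) of the joint initial state and unitary evolution restricted to nodes in $\mathsf{set}(v+\gapFr{r}{s}) = \mathsf{set}\fixFr{v-r}{v+s-1}$.

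The key steps, in order: (1) Define the global evolution and observe that the final measurement producing the color of node $w$ factors through the subsystem of nodes $\{w-r,\ldots,w\}$; hence the color output by the whole frame $v+\fr{s}$ factors through the subsystem indexed by $\mathsf{set}(v+\gapFr{r}{s})$. (2) Given a gap-$r$ placement $\omega$ of a collection $F = (\fr{s_1},\ldots,\fr{s_t})$, the defining disjointness condition $\mathsf{set}(\omega_j + \gapFr{r}{s_j}) \cap \mathsf{set}(\omega_{j'} + \gapFr{r}{s_{j'}}) = \emptyset$ means the $t$ subsystems through which the $t$ tableaux factor are on \emph{disjoint} sets of nodes. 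Since the global initial state is a product state across nodes (each node initializes its own register independently with no shared entanglement), and the unitaries are local/forward-propagating, the joint state on these $t$ disjoint subsystems is a tensor product; therefore the outcome distributions are independent, and each $\Pr[\Phi(\omega_j + \fr{s_j}) = \zeta_j]$ depends only on $s_j$ (not on $\omega_j$, by cyclic symmetry of the anonymous ring), giving a single distribution $p_{s_j}$. Multiplying yields exactly the factorization $\Pr[\Phi(\omega+F) = (\zeta_1,\ldots,\zeta_t)] = p_{s_1}(\zeta_1)\cdots p_{s_t}(\zeta_t)$ required by Definition~\ref{def:independentColoring}. (3) Cyclicity of $\Phi$ follows from the rotational symmetry of the construction: relabeling nodes $v \mapsto v-1$ leaves the anonymous algorithm invariant, so $\Pr[\Phi = \phi\circ\mathsf{Prev}] = \Pr[\Phi = \phi]$.

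I expect the main obstacle to be making precise the claim that the outcome of the frame's measurement "factors through" the light-cone subsystem in a way that is robust to the fact that tableaux are tuples of colors rather than functions on sets (the subtlety flagged in the excerpt about not identifying $\phi(\fixFr{v}{v'})$ with a restricted partial function). Concretely, one must be careful that when two frames are placed at positions that happen to produce overlapping \emph{node sets} only in the gap regions, there is genuinely no back-action — this is where one-wayness is essential, as a two-way model would let a message travel backward out of the frame into a neighboring placement. The cleanest way to handle this is to track, for each node $w$, the tensor factor $\mathcal{H}_{w}^{(\mathrm{out})}$ on which its final color measurement acts, prove by induction on rounds that $\mathcal{H}_w^{(\mathrm{out})}$ lives inside $\bigotimes_{j=0}^{r} \mathcal{H}_{w-j}^{(\mathrm{init})}$, and then invoke the elementary fact that measurements on disjoint tensor factors of a product state yield independent outcomes. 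Everything else (cyclicity, extracting $p_s$, summing out) is bookkeeping. A secondary, more mundane obstacle is simply setting up enough notation for the quantum model in the appendix without importing heavy formalism, since the paper aims to keep quantum details contained.
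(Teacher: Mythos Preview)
Your proposal is correct and takes essentially the same approach as the paper: both arguments show that the (Heisenberg-evolved) measurement operator for a frame is supported on the corresponding gapped frame, so that for a gap-$r$ placement the operators act on disjoint tensor factors of the product initial state, yielding the factorization in Definition~\ref{def:independentColoring}. The paper carries this out by an explicit recursive definition of backward-evolved projectors $\Xi_r(s,\zeta)$, whereas you phrase it as a light-cone induction, but the content is the same.
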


\subsection{Colorings of line segments and exponential bounds}

Let us consider colorings of line segments,
where by a line segment we think of some interval of nodes of the ring, not containing all of them (more precisely, at least $r$ of them).
Let $k\ge 1$, $v\in\Nodes$, and consider interval of vertices $v+\fr{k}=\fixFr{v}{v+k-1}$. We call a map $\psi\colon\mathsf{set}\fixFr{v}{v'}\rightarrow\Sigma$ a coloring of the line segment. Given a set of frames $F=(\fr{s_1},\ldots,\fr{s_t})$ we say that a gap-$r$ placement $\omega$ of $F$ \emph{respects} the interval $\fixFr{v}{v'}$ if
\[
\mathsf{set}(\omega_j+\fr{s})\subseteq \mathsf{set}\fixFr{v}{v'}
\]
for all $j$. Note that it is still allowed that  
$
\mathsf{set}(\omega_j+\gapFr{r}{s})\not\subseteq \mathsf{set}\fixFr{v}{v'},
$
for example, when $\omega_j=v$.

Along the same lines as Definitions~\ref{def:independentColoring} and \ref{def:nonsignalingColoring} for complete rings, we define independent and non-signaling colorings for line segments.

\begin{defn}
\label{def:independentColoringSegment}
We say that a random coloring $\Psi$ of a line segment $\fixFr{v}{v'}$ is \emph{independent at distances beyond $r$}  if for every $s\ge 1$ there exists a probability distribution $p_s$ over $\Sigma^s$ such that,
for every collection $F=(\fr{s_1},\ldots,\fr{s_t})$ of sliding frames, a gap-$r$ placement $\omega=(\omega_1,\ldots,\omega_t)$ of $F$ that respects $\fixFr{v}{v'}$, 
and $\zeta_j\in\Sigma^{s_j}$ for every $j$,
we have 
\[
\Pr[\Phi(\omega+F)=(\zeta_1,\ldots,\zeta_t)]
=
p_{s_1}(\zeta_1) \cdot \ldots \cdot p_{s_t}(\zeta_t).
\]
\end{defn}

\begin{defn}
\label{def:nonsignalingColoringLine}
We call a random coloring $\Psi$ of a line segment $\fixFr{v}{v'}$ \emph{non-signaling at distances beyond $r$} if, for every collection of frames $F$, the marginal distribution induced by $\Psi(\omega+F)$ is the same for all gap-$r$ placements $\omega$ of $F$ that respect $\fixFr{v}{v'}$.
\end{defn} 

\begin{lem}
\label{lem:SegmentsToExp}
Suppose there exist a length $k\ge 2$, a probability $q<1$, and a node $v\in\Nodes$ such that, for all random colorings $\Psi$ of the interval $\fixFr{v}{v+k-1}$ that are independent beyond distance $r$, the success probability of $\Psi$ is at most $q$.
Then for every coloring $\Phi$ of the whole ring that is independent beyond distance $r$ the success probability of $\Phi$ is at most $q^{\lfloor n/(k+r)\rfloor}$.
\end{lem}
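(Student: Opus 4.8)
The plan is to tile the ring by $m:=\lfloor n/(k+r)\rfloor$ disjoint gap-$r$ gapped frames of length $k$, invoke the independence of $\Phi$ to decouple the colors seen inside these frames, and then bound the probability of success inside each individual frame by $q$. (If $m=0$ the asserted bound is $q^0=1$ and there is nothing to prove, so assume $m\ge 1$.) For the tiling, take the collection $F:=(\fr{k},\ldots,\fr{k})$ with $m$ copies of $\fr{k}$ --- it is gap-$r$-placable because $m(k+r)\le n$ --- and the tuple $\omega$ with $\omega_j:=r+(j-1)(k+r)$ for $j=1,\ldots,m$. Then $\mathsf{set}\big(\omega_j+\gapFr{r}{k}\big)$ is the block of $k+r$ consecutive nodes $\{(j-1)(k+r),\ldots,j(k+r)-1\}$, and because $m(k+r)\le n$ these $m$ blocks are pairwise disjoint subsets of $\Nodes$, so $\omega$ is a gap-$r$ placement of $F$. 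Definition~\ref{def:independentColoring} then supplies a distribution $p_k$ on $\Sigma^k$ with $\Pr[\Phi(\omega+F)=(\zeta_1,\ldots,\zeta_m)]=p_k(\zeta_1)\cdots p_k(\zeta_m)$; specializing to a one-frame collection, the marginal law of $\Phi$ on every length-$k$ frame equals this same $p_k$.

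Next I would apply the hypothesis to the segment $\fixFr{v}{v+k-1}$. Let $\Psi$ be the restriction of $\Phi$ to $\mathsf{set}\fixFr{v}{v+k-1}$, which is a random coloring of that segment. I claim $\Psi$ is independent beyond distance $r$ in the sense of Definition~\ref{def:independentColoringSegment}, with the same family $\{p_s\}_{s\ge 1}$ that comes from $\Phi$: for any collection $F'$ and any gap-$r$ placement $\omega'$ of $F'$ that respects $\fixFr{v}{v+k-1}$, every frame of $\omega'+F'$ lies inside the segment, so $\Psi(\omega'+F')=\Phi(\omega'+F')$, and a placement respecting the segment is in particular a gap-$r$ placement of $F'$ on the whole ring, whence Definition~\ref{def:independentColoring} applied to $\Phi$ yields exactly the required product formula. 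The hypothesis therefore gives $\Pr[\Psi\text{ proper}]\le q$. Since $\Psi$ is proper precisely when the tableau $\Phi(v+\fr{k})$ has all consecutive entries distinct, and since the marginal of $\Phi$ on $v+\fr{k}$ is $p_k$, this says $\sum_{\zeta\in\Sigma^k\text{ proper}}p_k(\zeta)=\Pr[\Psi\text{ proper}]\le q$.

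To conclude, observe that a proper coloring of the whole ring is in particular proper on each window $\fixFr{\omega_j}{\omega_j+k-1}$, i.e.\ each tableau $\Phi(\omega_j+\fr{k})$ is proper; hence
\[
\Pr[\Phi\text{ proper}]\;\le\;\Pr\big[\,\Phi(\omega_j+\fr{k})\text{ is proper for every }j\,\big]\;=\!\!\sum_{\substack{\zeta_1,\ldots,\zeta_m\in\Sigma^k\\\text{each proper}}}\!\!p_k(\zeta_1)\cdots p_k(\zeta_m)\;=\;\Big(\sum_{\zeta\in\Sigma^k\text{ proper}}p_k(\zeta)\Big)^{\!m}\;\le\;q^{m},
\]
where the middle equality is the product formula from the first paragraph and the last step applies the bound from the second paragraph (using nonnegativity of probabilities); as $m=\lfloor n/(k+r)\rfloor$, this is exactly the claimed bound, which is non-trivial since $q<1$. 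I expect the only genuinely delicate point to be the second paragraph: carefully checking that restricting a ring coloring to a length-$k$ window produces a legitimate line-segment coloring that is independent beyond $r$, so that the hypothesis is actually applicable, together with pinning down the placement so that the $m$ gapped frames really do fit disjointly around the ring. Everything else is distributivity of products and monotonicity of probability.
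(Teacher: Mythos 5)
Your proof is correct and follows essentially the same route as the paper's: restrict $\Phi$ to the segment to get an independent coloring $\Psi$ whose proper mass under $p_k$ is at most $q$, then tile the ring with $m=\lfloor n/(k+r)\rfloor$ disjoint gap-$r$ frames of length $k$ and use the product formula from Definition~\ref{def:independentColoring} to multiply the bounds. The extra care you take in verifying that the restriction is a legitimate independent line-segment coloring and that the gapped frames are pairwise disjoint is exactly the right diligence, though the paper treats these points as immediate.
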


\begin{proof}
Suppose $\Phi$ is independent (beyond distance $r$), and let $\Psi$ be its restriction to the domain $\mathsf{set}\fixFr{v}{v+k-1}$.
Since $\Phi$ is independent, first, so is $\Psi$, and, second, there is a probability distribution $p_k$ over $\psi\in\Sigma^k$ such that 
\[
\Pr[\Phi(\fixFr{v}{v+k-1})=\psi]
=
p_{k}(\psi).
\]
From the assumptions of the lemma,
\[
\sum_{\text{proper }\psi\in\Sigma^k } p_k(\psi) \le q.
\]

Let $m:=\lfloor n/(k+r)\rfloor$ for brevity.
Now consider the collection of frames $F=(\fr{k},\ldots,\fr{k})$, consisting of $m$ frames, each of length $k$, and consider their gap-$r$ placement
$\omega=(\omega_1,\ldots,\omega_m)$ where 
$\omega_j:=j(r+k)$.
Since $\Phi$ is independent, we have
\[
\Pr[\Phi(\omega+F)=(\zeta_1,\ldots,\zeta_m)]
=
p_{k}(\zeta_1) \cdot \ldots \cdot p_{k}(\zeta_m).
\]
As a result, the probability that for a random $(\zeta_1,\ldots,\zeta_m) = \Phi(\omega+F)$ we have that $\zeta_j$ is proper for all $j\in\{1,\ldots,m\}$ is at most $q^m$. Finally such a piece-wise properness is a prerequisite for the whole $\Phi$ to be proper, and the probability of $\Phi$ being proper can be no larger.
\end{proof}

\noindent
In Section~\ref{sec:nonsig9nodeLine}, we instantiate this lemma with $r=1$, $k=9$, and $q=11/15$, thus proving Theorem~\ref{thm:expBound}.

\section{Success Probability of Non-Signaling Colorings as a Linear Program}
\label{sec:SuccessProbLP}

If we inspect Definitions~\ref{def:nonsignalingColoring} and \ref{def:nonsignalingColoringLine} of non-signaling colorings of the ring and the line segment, respectively, we see that the maximum success probability among non-signaling colorings can be computed via a linear program.
In this section, we provide details of this claim by stating the two linear programs.
Using Wolfram Mathematica, we managed to symbolically solve both linear programs for $r=1$ and a large enough number of nodes so that the maximum success probability is strictly less than $1$. Here we also present those results and their implications.

\subsection{Optimal colorings for rings of up to 11 nodes}
\label{sec:nonsig11nodeRing}

Let $\Phi$ be an arbitrary non-signaling coloring beyond distance $r$, and, for every coloring $\phi\in\Sigma^\Nodes$, let $p_\phi$ be the probability that $\Phi=\phi$. Letting 
$p_\phi\ge 0$ be variables that sum up to $1$, the maximum success probability for 3-coloring among non-signaling colorings is given by the linear program 
\begin{subequations}
\label{eqn:LPFull}
\begin{alignat}{2}
&\text{maximize}&\quad&\sum_{\text{proper }\phi\in\Sigma^\Nodes}p_\phi \\
&\text{subject to}&& p_\phi=p_{\phi\circ\mathsf{Prev}}  \qquad \text{for all }\phi\in\Sigma^\Nodes, 
\label{eqn:LPFullCirc}
\\
&&& 
\sum_{\substack{\phi\in\Sigma^\Nodes \\ \forall j\colon \phi(\omega_j+\fr{s_j})=\zeta_j}} p_\phi
= \sum_{\substack{\phi\in\Sigma^\Nodes \\ \forall j\colon \phi(\omega'_j+\fr{s_j})=\zeta_j}} p_\phi
\qquad \text{for all }F,\zeta,\omega,\omega',
\label{eqn:LPFullFrames}
\end{alignat}
\end{subequations}
where the latter constraint is for all collections of gap-$r$-placable sliding frames $F=(\fr{s_1},\ldots,\fr{s_t})$, all corresponding collections of tableaux $\zeta=(\zeta_1,\ldots,\zeta_t)$, and all pairs of gap-$r$ placements $\omega,\omega'$ of $F$.
The linear program (\ref{eqn:LPFull}) is feasible for all $r$; for example, consider $p_\phi=1/|\Sigma|^n$ for all $\phi$.

We symbolically solved (\ref{eqn:LPFull}) for $r=1$ and $n$ up to $11$.\footnote{\label{fn:adhoc}%
Our solution of the linear program was somewhat ad hoc. First, we solved the primal and the dual problems numerically. Then, from those numerical results, we managed to get symbolic feasible solutions for the primal and the dual with matching objective values, implying their optimality.} While for $n$ up to $9$ we got the optimal value of $1$, for $n=10$ and $n=11$ we got $2/3$ and $32/63$, respectively.
This constitutes the proof that single-round one-way quantum distributed algorithms cannot succeed with 3-coloring 10-node and 11-node rings with probabilities higher than 2/3 and 32/63, respectively. Note that, however, this by itself does not yet rule out the possibility that for a larger number of nodes perfect coloring is again possible. For that, we have to show a similar result for line segments.

\subsection{Impossibility of perfectly coloring the 9-node line segment}
\label{sec:nonsig9nodeLine}

We can consider the equivalent of (\ref{eqn:LPFull}) for $k$-node line segments. Let us consider interval $\fixFr{w}{w+k-1}$, where $w$, the number of the initial node, is irrelevant. 
The maximum success probability among random colorings of this line segment that are non-signaling beyond distance $r$ can be computed via the linear program
\begin{subequations}
\label{eqn:LPFullLine}
\begin{alignat}{2}
&\text{maximize}&\quad&\sum_{\text{proper }\psi\in\Sigma^k}p_\psi \\
&\text{subject to}&& 
\sum_{\substack{\psi\in\Sigma^k \\ \forall j\colon \psi(\omega_j+\fr{s_j})=\zeta_j}} p_\psi
= \sum_{\substack{\psi\in\Sigma^k \\ \forall j\colon \psi(\omega'_j+\fr{s_j})=\zeta_j}} p_\psi
\qquad \text{for all }F,\zeta,\omega,\omega',
\label{eqn:LPFullLineFrames}
\end{alignat}
\end{subequations}
where the maximization is over probability distributions $p=(p_\psi)_{\psi\in\Sigma^k}$ 
and where the constraint (\ref{eqn:LPFullLineFrames}) is for all collections of sliding frames $F=(\fr{s_1},\ldots,\fr{s_t})$, all corresponding collections of tableaux $\zeta=(\zeta_1,\ldots,\zeta_t)$, and all pairs of gap-$r$ placements $\omega,\omega'$ of $F$ that respect  $\fixFr{w}{w+k-1}$.

Compared to the linear program (\ref{eqn:LPFull}) for the whole ring, now, in (\ref{eqn:LPFullLine}), we do not have the symmetry introduced by the cyclicity condition (\ref{eqn:LPFullCirc}). This results in a less constrained and, for the same number of nodes, slower-to-compute program.
Because of the results in Section~\ref{sec:nonsig11nodeRing} and the following claim, for $r=1$ and for $k$ up to $8$, we do not need any computation at all: the optimum is $1$.

\begin{clm}
\label{clm:RingsToSegment.}
Suppose there is a random coloring of the $n$-node ring that is non-signaling beyond distance $r$ and whose success probability is $q$. Then its restriction to any line segment of at most $n-r$ nodes is a random coloring of that line segment that is non-signaling beyond distance $r$ and has the success probability at least $q$.
\end{clm}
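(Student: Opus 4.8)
The plan is to take the random coloring $\Phi$ of the ring and define $\Psi$ as its restriction to the chosen line segment $\fixFr{w}{w+k-1}$, exactly as in the proof of Lemma~\ref{lem:SegmentsToExp}; that is, $\Pr[\Psi=\psi]$ is the total $\Phi$-weight of colorings $\phi$ whose restriction to $\mathsf{set}\fixFr{w}{w+k-1}$ equals $\psi$. The success-probability part is immediate: a coloring $\phi$ of the ring that is proper restricts to a proper coloring of any sub-interval, so the event ``$\Psi$ proper'' contains the event ``$\Phi$ proper'', whence the success probability of $\Psi$ is at least $q$. The substance of the claim is therefore the non-signaling part.

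For the non-signaling part, first I would fix a collection of sliding frames $F=(\fr{s_1},\ldots,\fr{s_t})$ and a collection of tableaux $\zeta=(\zeta_1,\ldots,\zeta_t)$, and let $\omega$ be any gap-$r$ placement of $F$ that respects $\fixFr{w}{w+k-1}$. The key observation is that, by the definition of $\Psi$ as a marginal, $\Pr[\Psi(\omega+F)=\zeta]=\Pr[\Phi(\omega+F)=\zeta]$: the events $\{\phi:\phi(\omega_j+\fr{s_j})=\zeta_j\ \forall j\}$ computed inside the segment and inside the full ring are literally the same set of ring-colorings, because ``respects'' guarantees $\mathsf{set}(\omega_j+\fr{s_j})\subseteq\mathsf{set}\fixFr{w}{w+k-1}$ so the relevant coordinates all lie in the segment. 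Hence the marginal that $\Psi(\omega+F)$ induces coincides with the one $\Phi(\omega+F)$ induces. Now $\omega$ is also a gap-$r$ placement of $F$ as a placement on the ring (a placement that respects a sub-interval is in particular a placement), so since $\Phi$ is non-signaling beyond distance $r$ and $F$ is gap-$r$-placable on the ring, the marginal of $\Phi(\omega+F)$ is independent of the choice of $\omega$ among ring placements, and a fortiori among those respecting $\fixFr{w}{w+k-1}$. Chaining the two equalities, the marginal of $\Psi(\omega+F)$ is the same for all placements $\omega$ of $F$ respecting $\fixFr{w}{w+k-1}$, which is exactly Definition~\ref{def:nonsignalingColoringLine}.

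The one bookkeeping point I would be careful about is the hypothesis $k\le n-r$: I would check that it guarantees every $F$ whose placements respect the $k$-node segment is actually gap-$r$-placable on the $n$-node ring, so that the non-signaling property of $\Phi$ genuinely applies. Concretely, if $\omega$ respects $\fixFr{w}{w+k-1}$ then all the non-gapped frames $\omega_j+\fr{s_j}$ fit inside a window of $k$ nodes; padding each by a gap of width $r$ uses at most $k+r\le n$ nodes of the ring in the worst arrangement, so $\sum_j(s_j+r)\le k+r\le n$ and $F$ is gap-$r$-placable, and the gapped frames can be taken disjoint on the ring. (If one adopts the reading that only one gap's worth of padding is needed, $k\le n-r$ still suffices with room to spare.) This is the only place the size bound enters, and it is a routine counting check rather than a real obstacle; the rest of the argument is a direct unwinding of the two definitions.
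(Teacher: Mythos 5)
Your argument is correct and is exactly the routine unwinding of Definitions~\ref{def:nonsignalingColoring} and~\ref{def:nonsignalingColoringLine} that the authors intend (the paper states this claim without proof): the marginal of $\Psi(\omega+F)$ literally equals that of $\Phi(\omega+F)$ whenever $\omega$ respects the segment, and non-signaling of $\Phi$ over all ring placements implies it over the subset of segment-respecting ones. One small remark: the bookkeeping about $k\le n-r$ is not actually needed for gap-$r$-placability, since the gapped frames of any gap-$r$ placement are pairwise disjoint subsets of the $n$ ring nodes and hence $\sum_j(s_j+r)\le n$ automatically; the hypothesis $k\le n-r$ is there only so that the restriction qualifies as a ``line segment'' in the paper's sense (an interval omitting at least $r$ nodes).
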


For $r=1$ and $k=9$, we symbolically solved the linear program (\ref{eqn:LPFullLine}) and found the objective value to be $11/15$ (Footnote~\ref{fn:adhoc} applies here too). Thus, as a result, since non-signaling colorings are independent, Lemma~\ref{lem:SegmentsToExp} applies, and we obtain Theorem~\ref{thm:expBound}.

\bigskip

A couple of remarks are in place. First, we remark that our current results do not imply that the maximum among success probabilities of random colorings of the $n$-node ring that are non-signaling beyond distance $1$ is exponentially small in $n$. Currently, we have such a result only for independent colorings.

Second, while we have shown that, for $r=1$ and $k=9$, the success probability of (\ref{eqn:LPFullLine}) is strictly below $1$, it might be the case that, for any number of rounds $r\ge1$, there is a length $k$ such that the optimal solution of (\ref{eqn:LPFullLine}) is strictly below $1$. Finding such a solution numerically seems to be far beyond what is computationally possible. But the existence of such a solution for every $r$ would imply an $\omega(1)$ lower bound on the round complexity of $3$-coloring via quantum distributed algorithms. 

\section{Impossibility of perfectly coloring the 4-node ring}
\label{sec:independent:n4}

Let us consider random colorings of the ring that are non-signaling beyond distance $1$. If,
in addition to the non-signaling constraints, we impose that the colorings must be independent, we can show stronger no-go results.
In particular, while we managed to find a perfect non-signaling coloring for every $n$ up to $9$ (i.e., find a solution of (\ref{eqn:LPFull}) of objective value $1$), here we show that a prefect independent coloring does not exist already for a much smaller number of nodes $n$.

\begin{thm}
\label{thm:n4independent}
There is no perfect random $3$-coloring of the $4$-node ring that is independent beyond distance $1$. 
\end{thm}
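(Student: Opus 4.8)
The plan is to parametrize an independent-beyond-distance-$1$ random coloring $\Phi$ of the $4$-node ring by the distributions $p_s$ on $\Sigma^s$ that it forces for small $s$, and then to derive a contradiction from the assumption that $\Phi$ is supported only on proper colorings. Since $n=4$ and $r=1$, the gap-$1$-placable collections are quite limited: a collection $(\fr{s_1},\ldots,\fr{s_t})$ needs $\sum_j(s_j+1)\le 4$, so the only nontrivial multi-frame collection is two frames of length $1$ each (using $\omega=(0,2)$, say), and single frames can have length up to $3$. Thus $\Phi$ being independent beyond $1$ means: there is a distribution $p_1$ on $\Sigma$ such that the colors of any two antipodal nodes $v$ and $v+2$ are i.i.d.\ with law $p_1$, and there are distributions $p_2,p_3$ on $\Sigma^2,\Sigma^3$ giving the laws of $\Phi(\fixFr{v}{v+1})$ and $\Phi(\fixFr{v}{v+2})$ (the same for every $v$, by cyclicity). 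I would first record the marginal consistency constraints tying $p_3$ to $p_2$ and $p_2$ to $p_1$.

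Next I would extract the contradiction from antipodal independence together with properness. Properness forces $\phi(0)\ne\phi(1)$, $\phi(1)\ne\phi(2)$, $\phi(2)\ne\phi(3)$, $\phi(3)\ne\phi(0)$; in particular every proper coloring of the $4$-ring has $\phi(0),\phi(2)$ arbitrary but $\phi(1)\notin\{\phi(0),\phi(2)\}$ and $\phi(3)\notin\{\phi(0),\phi(2)\}$. The key tension is this: the pair $(\Phi(0),\Phi(2))$ is independent with both coordinates distributed as $p_1$, so it puts positive weight on the event $\Phi(0)=\Phi(2)$ whenever $p_1$ is nondegenerate — and if $\Phi(0)=\Phi(2)=c$ then $\Phi(1)$ and $\Phi(3)$ are each forced into $\Sigma\setminus\{c\}$, a $2$-element set. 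On the other hand, if $p_1$ is degenerate (a point mass at some color $c$), then $\Phi(0)=\Phi(2)=c$ almost surely and we are in that same regime with probability $1$. I would then use the length-$2$ and length-$3$ distributions: by cyclicity $\Phi(\fixFr{0}{1})$ and $\Phi(\fixFr{2}{3})$ have the same law $p_2$, and antipodal independence pins down how $p_2$ decomposes, while $p_3$ must simultaneously be the law of $\Phi(\fixFr{1}{2})$ and of $\Phi(\fixFr{3}{0})$ (indices mod $4$) — running the window both ways around the ring over-determines the system. I expect that combining "$\Phi(0)$ and $\Phi(2)$ are independent copies of $p_1$'' with "conditioned on $\Phi(0)=\Phi(2)$, the coloring is still a convex combination of proper colorings whose middle colors avoid $\Phi(0)$'' yields an inconsistency: essentially, the symmetry between the $(0,2)$ pair and the $(1,3)$ pair (both antipodal, both i.i.d.\ $p_1$) cannot be reconciled with the properness graph being the $4$-cycle, which is bipartite with the two parts being exactly $\{0,2\}$ and $\{1,3\}$.

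Concretely, I would push the bipartite structure: in any proper $3$-coloring of $C_4$, the color multiset on $\{0,2\}$ and the color multiset on $\{1,3\}$ cannot both be "two equal colors'' unless those two colors differ, and one can enumerate the types (both parts monochromatic with distinct colors; one part monochromatic and the other using two colors; etc.). Independence of $\Phi(0),\Phi(2)$ forces $\Pr[\Phi(0)=\Phi(2)]=\sum_c p_1(c)^2$ and, by the identical argument applied to the antipodal pair $\{1,3\}$, $\Pr[\Phi(1)=\Phi(3)]=\sum_c p_1(c)^2$ as well — but properness makes these two events interact in a way that a direct case analysis on the at most $3$ possible common colors should show to be impossible unless $p_1$ is supported on a single color, which is then separately excluded because a point mass at $c$ would force $\Phi(1)=\Phi(3)=c$ too, contradicting $\Phi(1)\ne\Phi(2)\ne\Phi(3)$... actually contradicting that $\Phi(0)=\Phi(1)$ is forbidden while $\Phi(0)=c=\Phi(1)$. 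The main obstacle, and the step I would spend the most care on, is organizing this case analysis cleanly: there are several "types'' of proper $4$-colorings and one must verify that no convex combination of them simultaneously satisfies antipodal independence for \emph{both} diagonals with a common one-coordinate marginal $p_1$; I would handle it by writing $\Phi$ as a distribution over the (finitely many) proper colorings, imposing the linear constraints "$\Pr[\Phi(0)=a,\Phi(2)=b]=p_1(a)p_1(b)$'' and the cyclic-symmetry equalities, and checking the resulting small linear system is infeasible — which is exactly the finite computation the surrounding text alludes to, but here done by hand thanks to $n=4$.
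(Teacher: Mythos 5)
Your setup is the same as the paper's: enumerate the $18$ proper colorings of the $4$-ring, use cyclicity to collapse them into a few class-probabilities, and play the independence of the two antipodal pairs $\{0,2\}$ and $\{1,3\}$ (each i.i.d.\ with the one-node marginal $p_1$) against properness. The degenerate case ($p_1$ a point mass) is correctly excluded. However, the decisive step is missing: you never actually derive the contradiction in the nondegenerate case, you only assert that ``a direct case analysis \dots should show'' infeasibility. None of the qualitative observations you offer closes the argument on its own --- e.g.\ positive weight on $\Phi(0)=\Phi(2)$ is perfectly compatible with properness (colorings of type $(a,b,a,b)$), and the bipartite-structure remark only yields $\sum_c p_1(c)^2\ge 1/2$, which is consistent (it forces some color to have marginal at least $1/2$, not a contradiction by itself).

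More importantly, your proposed final step --- ``checking the resulting small \emph{linear} system is infeasible'' --- would fail as stated. The independence constraints $\Pr[\Phi(0)=a,\Phi(2)=b]=p_1(a)p_1(b)$ are quadratic in the unknowns, since $p_1$ is itself the (unknown) marginal of the distribution; if you relax to the purely linear (non-signaling) constraints, the system is in fact \emph{feasible}: the paper finds perfect non-signaling colorings for every $n$ up to $9$, including $n=4$. So the product structure is essential and must be used explicitly. The paper does this by setting $r_\sigma=\Pr[\Phi(v)=\sigma]$, identifying the off-diagonal antipodal probabilities with the class-probabilities $q_0=r_1r_2$, $q_1=r_0r_2$, and solving for the probability $p_{0,1}$ of the coloring $(0,1,0,1)$ to get $p_{0,1}=\bigl(r_2-1-1/\sqrt{2}\bigr)\bigl(r_2-1+1/\sqrt{2}\bigr)$; nonnegativity forces $r_\sigma\le 1-1/\sqrt{2}<1/3$ for every color $\sigma$, contradicting $r_0+r_1+r_2=1$. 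You would need to supply this (or an equivalent) quadratic computation to complete the proof.
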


\begin{proof}
Let us assume the contrary: there is such a coloring $\Phi$. 
There are $a_4=18$ proper $3$-colorings $\phi$ of the $4$-node ring, which we write below as $(\phi(0),\phi(1),\phi(2),\phi(3))\in\Sigma^4$ where we take $\Sigma=\{0,1,2\}$ to be the set of colors.
Since $\Phi$ is cyclic, there are probabilities $p_{0,1},p_{0,2},p_{1,2},q_0,q_1,q_2$ such that $\Phi$ equals each 
 \begin{alignat*}{2}
& (0, 1, 0, 1), \; (1, 0, 1, 0) &\qquad&\text{w.p. }p_{0,1}, \\
& (0, 2, 0, 2), \; (2, 0, 2, 0) &&\text{w.p. }p_{0,2}, \\
& (1, 2, 1, 2), \; (2, 1, 2, 1) &&\text{w.p. }p_{1,2}, \\
& (0, 1, 0, 2), \; (1, 0, 2, 0), \; (0, 2, 0, 1), \;  (2, 0, 1, 0) &&\text{w.p. }q_0, \\
& (1, 0, 1, 2), \;  (0, 1, 2, 1), \; (1, 2, 1, 0), \; (2, 1, 0, 1) &&\text{w.p. }q_1, \\
& (2, 0, 2, 1), \; (0, 2, 1, 2), \; (2, 1, 2, 0), \; (1, 2, 0, 2) &&\text{w.p. }q_2.
\end{alignat*}
We have $2(p_{0,1}+p_{0,2}+p_{1,2})+4(q_0+q_1+q_2)=1$.

Given a node $v$ and a color $\sigma\in\Sigma$, let $r_\sigma$ be the probability that $\Phi(v)=\sigma$. Note that $r_\sigma$ is independent from the choice of $v$ due to the cyclicity of $\Phi$. We have
$
r_0 = p_{0,1}+p_{0,2} + 2 q_0 + q_1 + q_2,
$
with analogous expressions for $r_1$ and $r_2$. From these three expressions, we obtain
\[
p_{0,1}  = (r_0+r_1-r_2)/2-q_0-q_1.
\]
Given a node $v$, the probability that both $\Phi(v)=1$ and $\Phi(v+2)=2$ is $q_0$. On the other hand, since $\Phi$ is independent beyond distance $1$, $\Phi(v)$ and $\Phi(v+2)$ are independent, taking colors $1$ and $2$ with probability $r_1$ and $r_2$, respectively. As a result, $q_0=r_1r_2$, and similarly $q_1=r_0r_2$.
Because $r_0+r_1=1-r_2$, we get $q_0+q_1=r_2-r_2^2$ and, further on, 
\[
p_{0,1}
  = (1-r_2-r_2)/2-r_2+r_2^2
 = \big(r_2-1-1/\sqrt{2}\big)\big(r_2-1+1/\sqrt{2}\big).
\]
For $p_{0,1}$ and $r_2$ both to be probabilities, we must have $r_2\le 1-1/\sqrt{2}<1/3$. The same argument yields $r_0<1/3$ and $r_1<1/3$, which is a contradiction because $r_0+r_1+r_2=1$.
\end{proof}

Theorem~\ref{thm:n4independent} implies that already for $n=4$, a single-round one-way quantum distributed algorithm cannot $3$-color perfectly. However, aside from stating that the maximum success probability cannot be $1$, the theorem does not place any other upper bound on the probability.
While, for non-signaling colorings, the maximum success probability can be characterized by a linear program, it is likely not the case for independent colorings, making its computation more difficult.

In the next section, we briefly discuss a more powerful model of quantum distributed algorithms whose round complexity can be bounded by the maximum success probability among non-signaling colorings, but not that among independent colorings.

\section{Discussion of Future Work}
\label{sec:FutureWork}

\paragraph{Exponential bounds given unique indices.}

In this paper we have considered the model of quantum distributed algorithms where initially each node's memory is in the same state. One could also consider a more powerful model where, at the start of the computation, each node $v\in\Nodes$ is given a unique label $i\in\bZ_n$. Here $v$ and $i$ do not need to be equal: $v$ lets us reason about node's location on the ring, but it is unknown to the computation, while $i$ can be used by the computation, but it does not reveal anything about labels of other nodes in its vicinity. Let us suppose here that these labels are assigned uniformly at random, there being $n!$ different assignments in total. In addition, we could assume that there is an $n$-partite quantum state (generalizing an $n$-partite random bitstring of the classical case) and that at the beginning of the computation the node labeled $i$ receives $i$-th part of this state.

Even if we give these extra resources to quantum $r$-round one-way distributed algorithms, along the same lines as for Lemma~\ref{lem:QuantumNonSignal}, it can be shown that the resulting random coloring $\Phi$ of the ring is non-signaling beyond distance $r$. Hence, the impossibility of perfect $3$-coloring in a single-round proven for $n\ge 10$ in Section~\ref{sec:SuccessProbLP} still applies to this stronger model. However, unlike for independent colorings, it is not clear if Lemma~\ref{lem:SegmentsToExp} applies to non-signaling colorings, and therefore it is not clear if the impossibility to perfectly color constant-length line segments can be elevated to exponential bounds for coloring the whole ring, as is done for Theorem~\ref{thm:expBound}. Showing that the success probability of quantum single-round one-way distributed algorithms to 3-color when these algorithms are aided by unique node labels and correspondingly shared quantum state is exponentially small is one problem left opened by the current work.

Better understanding the differences between non-signaling and independent colorings  might provide a better understanding of capabilities of quantum distributed algorithms with and without shared quantum states.

\paragraph{Generalization of round elimination.}

The round elimination technique~\cite{Brand2019RoundElem}, when applied to prove the $\Omega(\log^*\!n)$-round lower bound for $3$-coloring the ring by classical randomized distributed algorithms, is based essentially on the following observation.
Informally speaking, if there is an $r$-round randomized distributed algorithm that with high probability properly $c$-colors a line segment of length $r+1$ (i.e., $(r+2)$-node line segment), then there is an $(r-1)$-round randomized distributed algorithm that with high probability properly $2^c$-colors a line segment of length $r$. By repeatedly applying this observation, one finally reaches the ``base'' case, where one reasons about the probability of properly coloring two-node line segment without any communication.

This argument does not seem to work for non-signaling colorings because it reasons about line segments whose length is only $r+1$, yet the limitations of non-signaling colorings are due to long distance correlations.
It would be interesting to see if a similar technique to round elimination could be used for non-signaling colorings. In particular, one could potentially seek a statement similar to the following.

\begin{hyp}[informal]
There exists an integer constant $\delta\ge 1$, representing the number of rounds to be eliminated in one step, such that the following holds.
Suppose there is a line segment $\fixFr{1}{k}$, a random $c$-coloring $\Psi$ of this line segment that is proper with high probability, a frame-length $s$, and a round-number $r$  such that, for all gap-$r$ placements $\omega$ of the collection of frames $F=(\fr{s},\fr{s})$ that respect the line segment $\fixFr{1}{k}$, the marginal distributions over $(s,s)$-collections of tableaux induced by $\Psi(\omega+F)$ are the same.
Then there exists a random $2^c$-coloring $\Psi'$ of the same line segment such that $\Psi'$ is proper with high probability and, for all gap-$(r-1)$ placements $\omega'$ of the collection of frames $F'=(\fr{s-\delta},\fr{s-\delta})$ that respect the line segment $\fixFr{1}{k}$, the marginal distributions over $(s-\delta,s-\delta)$-collections of tableaux induced by $\Psi'(\omega'+F')$ are the same.
\end{hyp}

If the above hypothesis (or one similar to it) can be proven, we suspect that it might lead to a proof that, if $\Phi$ is a random 3-coloring of an $n$-node ring that is both non-signaling beyond distance $r$ and that is proper with high probability, then $r=\Omega(\log^*\!n)$. This would imply the same lower bound on the round complexity of quantum distributed algorithms.

Better understanding of the linear program (\ref{eqn:LPFullLine}) for $r=1$ and $k=9$ and, in particular, under what restricted sets of constraints its optimum is still strictly below $1$, might yield some insights towards the hypothesis above.

\section*{Acknowledgements}

The authors are grateful to Sebastian Brandt for useful insights. FLG and AR were supported by JSPS KAKENHI Grant No.~JP20H05966 and MEXT Quantum Leap Flagship Program (MEXT Q-LEAP) Grants Nos.~JPMXS0118067394 and JPMXS0120319794. FLG was also supported by JSPS KAKENHI Grants Nos.~JP19H04066, JP20H00579, JP20H04139, JP21H04879.

{
\small

}

\appendix

\section{Quantum Distributed Algorithm}
\label{app:quantum}

We assume that the reader of this section is familiar with the basics of quantum computation (for an introductory textbook, see~\cite{NielsenChuang}).

\subsection{Model of quantum one-way distributed algorithm}
\label{sec:model}

Let the memory of a node $v\in\Nodes$ be held in register $\regN_v$, which consists of two subregisters: the workspace register $\regW_v$ and the (forward) message register $\regMn_v$. Let $\cH_{\regW_v}$, $\cH_{\regMn_v}$, and $\cH_{\regN_v}=\cH_{\regW_v}\otimes\cH_{\regMn_v}$ be complex Euclidean spaces corresponding to the registers $\regW_v$, $\regMn_v$, and $\regN_v$, respectively. 
We assume that each $\regW_0,\regW_1,\regW_2,\ldots$ consists of the same number of qubits, thus the dimensions of all $\cH_{\regW_0},\cH_{\regW_1},\cH_{\regW_2},\ldots$ are the same, and similarly we assume that each $\regMn_0,\regMn_1,\regMn_2,\ldots$  consists of the same number of qubits.
We additionally assume that all nodes run the same local computation.
We may drop the subscript $v$ when convenient.

The collective memory of nodes is initialized to the state $|\pmb{0}\>^{\otimes n}$, which is a pure state on $\bigotimes_{x\in[n]}\cH_{\regN_v}$; here $|\pmb{0}\>$ is the state of the register $\regN_v$ with all individual qubits set to $|0\>$.
Let $V$ be the \emph{messaging unitary} defined by the linear extension of its action on product states as 
\[
V \colon \bigotimes_{v\in \Nodes}|b_v,c_v\>_{\regN_v}
\mapsto
\bigotimes_{v\in \Nodes}|b_v,c_{v-1}\>_{\regN_v},
\]
where $|b_v\>\in\cH_{\regW_v}$ and $|c_v\>\in\cH_{\regMn_v}$. The unitary $V$ implements one round of communication.

The local actions of every node between two communication rounds are described by a unitary $U$, which acts on $\cH_{\regN}$. It is the same for every node and, without loss of generality, the same for every round.%
\footnote{We can assume that the same unitary $U$ is applied at every round because $U$ can be designed so that the node keeps a round counter in its memory and then controls its actions based on it.}
 Likewise, the final local projective measurement $\Pi=\{\Pi_\sigma\colon \sigma\in \Sigma\}$ is the same for every node. We permit this measurement not to be in the standard computational basis, therefore there is no loss of generality in not applying any local unitary after the last round of communication $V$.
 
An $r$-round protocol starts with the state $|\pmb{0}\>^{\otimes n}$ and then alternates between applications of $U$ for every node and $V$ on the joint system. Thus, the overall final state right before the final measurement is $(VU^{\otimes n})^r|\pmb{0}\>^{\otimes n}$.
Finally, each node performs the measurement $\Pi$ on its memory, outputting the result of the measurement. The joint result being random, we treat it as a random variable $\Phi$ taking values in $\Sigma^\Nodes$.

\subsection{Proof of Lemma~\ref{lem:QuantumNonSignal}}
\label{sec:quantumIndependence}

Suppose we have a quantum $r$-round one-way distributed algorithm specified by $U$ and $\Pi$.
Let the random coloring $\Phi$ be the output of the algorithm.

Consider $s\ge 1$ and $\zeta=(\sigma_0,\ldots,\sigma_{s-1})\in\Sigma^s$. Let us consider the space $\bigotimes_{i=-r}^{s-1}\cH_{\regN_i}$, where we label each individual instance of the space $\cH_{\regN}$ not with the number of a node but with an index $i$ in the gapped sliding frame $\gapFr{r}{s}$.
Let us define an orthogonal projector $\Xi_r(s,\zeta)$ recursively for $r\ge 0$ as follows. 
Intuitively, we will start with a final measurement operator $\Xi_0(s,\zeta)$ and then run the computation in reverse for $r$ rounds, seeing how that operator evolves under this reverse computation.
For $r=0$, let 
\[
\Xi_0(s,\zeta):=\bigotimes_{i=0}^{s-1}(\Pi_{\sigma_i})_{\regN_i}.
\]
For $r\ge 1$, first, let
\[
\Nodes_{r+s}
\colon
\bigotimes_{i=-r}^{s-1}\cH_{\regN_i}
\rightarrow
\cH_{\regW_{-r}}\otimes\bigotimes_{i=-r+1}^{s-1}\cH_{\regN_i} \otimes \cH_{\regMn_s}
\]
be the linear isometry defined as
\[
\Nodes_{r+s}
\colon
\bigotimes_{i=-r}^{s-1}|b_i,c_i\>_{\regN_i}
\mapsto
|b_{-r}\>_{\regW_{-r}}\otimes\bigotimes_{i=-r+1}^{s-1}|b_i,c_{i-1}\>_{\regN_i} \otimes |c_{s-1}\>_{\regMn_s}.
\]
Then, for $r\ge 1$, we recursively define
\[
\Xi_r(s,\zeta) :=
(U^{-1})^{\otimes(r+s)} \Nodes_{r+s}^{-1}
\Big(I_{\regW_{-r}}\otimes \Xi_{r-1}(s,\zeta)\otimes I_{\regMn_s}\Big)
\Nodes_{r+s} U^{\otimes(r+s)},
\]
where $I$ denotes the identity operator on the corresponding registers.

Given a fixed $s$, projectors $\Xi_r(s,\zeta)$ are orthogonal for distinct $\zeta$ and the sum $\sum_{\zeta\in\Sigma^{s}}\Xi_r(s,\zeta)$ equals the identity operator on $\bigotimes_{i=-r}^{s-1}\cH_{\regN_i}$.
Define the probability distribution $p_s$ over $\Sigma^s$ as
\[
p_s(\zeta) :=
\<\iniSt|^{\otimes(r+s)} \Xi_r(s,\zeta) |\iniSt\>^{\otimes(r+s)}.
\]

Now, given a node $v$, we can consider $\Xi_r(s,\zeta)$ to act on the space corresponding to the interval of nodes $v+\gapFr{r}{s}=\fixFr{v-r}{v+s-1}$ by associating $\cH_{\regN_i}$, where $i\in\gapFr{r}{s}$ is an index, with $\cH_{\regN_{v+i}}$, where $v+i$ is the number of a node. To specify that $\Xi_r(s,\zeta)$ acts on the space corresponding to those nodes, we write it as $(\Xi_r(s,\zeta))_{\regN_{v-r}..\regN_{v+s-1}}$.

\bigskip

Consider a collection of sliding frames $F=(\fr{s_1},\ldots,\fr{s_t})$ and, for every $j\in\{1,\ldots,t\}$, a tableau $\zeta_j\in\Sigma^{s_j}$.
Suppose $\omega$ is a gap-$r$ placement of $F$.
Because of the constraints on gap-$r$ placements, we can see that, for distinct $j$, the projectors  
$(\Xi_r(s_j,\zeta_j))_{\regN_{\omega_j-r}..\regN_{\omega_j+s_j-1}}$ act on spaces corresponding to disjoint sets of nodes.

Let use fix $\omega$ and $\zeta$, and let us write $\zeta_j=(\zeta_{j,0},\ldots,\zeta_{j,s_j-1})$ for every $j$.
Let us define
\[
p:=\Pr[\Phi(\omega+F)=(\zeta_1,\ldots,\zeta_t)],
\]
that is, $p$ is the probability that for every $j\in\{1,\ldots,t\}$ and every $i\in\{0,\ldots,s_j-1\}$ the node $\omega_j+i$ outputs $\zeta_{j,i}$. This probability equals
\[
p=\|\Xi_{\mathsf{final}}(VU^{\otimes n})^r|\iniSt\>^{\otimes n}\|^2,
\]
where
\[
\Xi_{\mathsf{final}} :=  \bigotimes_{j=1}^t\bigotimes_{i=0}^{s_j-1}(\Pi_{\zeta_{j,i}})_{\regN_{\omega_j+i}} \otimes I_{\text{rest}}
 =  \bigotimes_{j=1}^t(\Xi_0(s_j,\zeta_j))_{\regN_{\omega_j}..\regN_{\omega_j+s_j-1}} \otimes I_{\text{rest}}
\]
and $I_{\text{rest}}$ is the identity on the joint space of the nodes not in any $\omega_j+\fr{s_j}$.

Let us start with the final measurement operator $\Xi_{\mathsf{final}}$ and effectively run the computation in reverse, obtaining the measurement operator
\[
\Xi_{\mathsf{init}}:= (VU^{\otimes n})^{-r} \Xi_{\mathsf{final}}(VU^{\otimes n})^r
\]
on the initial state. We can write
\[
\Xi_{\mathsf{init}} = 
\bigotimes_{j=1}^t(\Xi_r(s_j,\zeta_j))_{\regN_{\omega_j-r}..\regN_{\omega_j+s_j-1}} \otimes I_{\text{rest}'},
\]
where $I_{\text{rest}'}$ is the identity on the joint space of the nodes not in any $\omega_j+\gapFr{r}{s_j}$.
The correctness of the above expression for $\Xi_{\mathsf{init}}$ can be seen inductively, using the recursive definition of projectors $\Xi_r$.

Finally, since the initial state is a product state, we get the desired independence. More precisely, 
because $p=\|\Xi_{\mathsf{init}}|\iniSt\>^{\otimes n}\|^2$, we get that
\(
p= p_{s_1}(\zeta_1)\cdot \ldots \cdot p_{s_t}(\zeta_t).
\)

\section{Bounds from Inspecting Pairs of Nodes}
\label{sec:No3Coloring}

Here we consider single-round algorithms, and we consider probability distributions over pairs of non-adjacent nodes, particularly focussing on whether the two nodes of the pair have the same color. We will be interested in pairs $(p,p')$ of such probability distributions, and we will be able to find pairs that serve as witnesses that one cannot $3$-color perfectly.

In Section~\ref{sec:boundsN11N22}, we consider non-signaling colorings of the $n$-node ring and computationally obtain witnesses for $n=11$ and $n=13,14,\ldots,22$ that bound the success probability strictly below $1$. In Section~\ref{sec:ExpInstance}, we obtain a similar witness for the $15$-node line segment, showing that the probability of properly coloring this segment is less than $\frac{1382}{1383}$, which then, using Lemma~\ref{lem:SegmentsToExp}, yields the following theorem, which is a weaker version of Theorem~\ref{thm:expBound}. 

\begin{thm}
\label{thm:expBoundSimple}
The probability that a single-round one-way quantum distributed algorithm properly 3-colors an $n$-node ring is at most $(\frac{1382}{1383})^{\lfloor n/16\rfloor}$.
\end{thm}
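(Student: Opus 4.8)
The plan is to combine Lemma~\ref{lem:SegmentsToExp} with a concrete obstruction for a short line segment, exactly mirroring how Theorem~\ref{thm:expBound} follows from the $9$-node computation, but now using the weaker (and more combinatorially transparent) tool of pairwise-distance correlations rather than the full linear program. Concretely, I would fix $r=1$ and $k=15$, and show that for every random coloring $\Psi$ of a $15$-node line segment that is independent beyond distance $1$, the probability that $\Psi$ is proper is at most $q:=\frac{1382}{1383}$. Granting that, Lemma~\ref{lem:SegmentsToExp} immediately gives success probability at most $q^{\lfloor n/(k+r)\rfloor}=\left(\frac{1382}{1383}\right)^{\lfloor n/16\rfloor}$ for any coloring of the whole ring that is independent beyond distance $1$; since by Lemma~\ref{lem:QuantumNonSignal} the output of a single-round one-way quantum distributed algorithm is such a coloring, the theorem follows.

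The heart of the argument is therefore the $15$-node line-segment bound, and here I would follow the ``two experiments'' template sketched for the $11$-node ring in Section~\ref{intro:n11}. First I would identify two probability distributions $p$ and $p'$ over pairs of non-adjacent nodes of the segment at prescribed distances, chosen so that the expected value of the indicator ``the two endpoints of the pair have the same color'' differs, over \emph{all proper colorings} of the segment, by at least a fixed gap $\gamma>0$; this is a finite check since there are only finitely many proper colorings of a $15$-node segment (a computer search, as in Appendix~\ref{app:code11}). Because $\Psi$ is independent beyond distance $1$, for any two nodes at distance $d\ge 2$ the pair of colors $(\Psi(u),\Psi(u'))$ has a distribution depending only on $d$ through the common one-dimensional marginals $p_1$ and $p_2$ — in fact the ``match'' probability is $\sum_{\sigma}p_1(\sigma)p_1(\sigma)$ regardless of $d$ — so the two experiments, restricted to proper colorings, would have to output \texttt{Match} with \emph{equal} probability up to the total weight $\Psi$ places on improper colorings. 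Quantitatively, if $\Psi$ is proper with probability $>q$, the two experiment statistics can differ by at most something like $2(1-q)$ (the improper-coloring mass can affect each experiment by at most its own weight), and choosing the numbers so that $2(1-q)<\gamma$, i.e. $q>1-\gamma/2$, forces a contradiction; solving for the sharp threshold with the optimal choice of $p,p'$ and exploiting the structure of improper colorings (not just bounding their contribution crudely) yields the clean value $q=\frac{1382}{1383}$.

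I expect the main obstacle to be the optimization over the pair of distributions $(p,p')$: one wants the distributions supported on as few distinct distances as possible (to keep the search small) while maximizing the guaranteed gap $\gamma$ relative to the penalty $1-q$ incurred from improper-coloring mass, and getting the \emph{exact} constant $\frac{1382}{1383}$ — rather than merely ``some constant below $1$'' — requires carefully accounting for how much an improper coloring can simultaneously push Experiment~1 up and Experiment~2 down, which is itself a small linear program over the (finitely many) improper colorings of the $15$-node segment. This is the step where Section~\ref{sec:ExpInstance} does the real work; everything else — the reduction through Lemma~\ref{lem:SegmentsToExp}, the independence-implies-distance-insensitivity observation, and the final exponentiation — is routine once the witness $(p,p')$ and its gap have been certified by computation. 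I would present the explicit $p$, $p'$, the value $\gamma$, and the improper-coloring bookkeeping as a verifiable finite certificate, and then let Lemma~\ref{lem:SegmentsToExp} close the proof.
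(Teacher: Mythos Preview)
Your proposal is correct and matches the paper's proof: instantiate Lemma~\ref{lem:SegmentsToExp} with $r=1$, $k=15$, and $q=\tfrac{1382}{1383}$, where the segment bound comes from a computed pair $(p,p')$ of distributions over non-adjacent node pairs yielding a positive bias $\Delta$ on all proper colorings and a bounded opposite swing $-\Gamma$ on improper ones, so that the success probability is at most $\Gamma/(\Delta+\Gamma)<\tfrac{1382}{1383}$. The only difference is presentational---the paper states the search for $(p,p')$ directly as the linear program~(\ref{eqn:LPsegment}) and reports the explicit numerical values of $\Delta$ and $\Gamma$, whereas you narrate it via the two-experiments template---but the underlying argument is the same.
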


Both sections take advantage of the fact that, since $r=1$, the probability of two non-adjacent nodes having the same color is independent from the distance between them.
Results presented here might be of an independent interest in combinatorics.

\subsection{Constant bounds for certain small $n$}
\label{sec:boundsN11N22}

Suppose \(\Phi\) is a random coloring that is non-signaling at distances beyond one.
Let us consider the collection of frames $F=(\fr{1},\fr{1})$. There are $n(n-3)$ gap-one placements $\omega=(v,v+d)$ of $F$, which we specify by a node $v\in\Nodes$ and a distance $d\in\{2,3,\ldots,n-2\}$. Since $\Phi$ is non-signaling, the marginal distribution induced by $\Phi(\omega+F)=((\Phi(v)),(\Phi(v+d)))$ is the same for all such $\omega$.
In particular, the average probability
\[
\frac{1}{n}
\sum_{v\in \Nodes}\Pr[\sigma_1=\sigma_2\colon \sigma_1=\Phi(v) \And \sigma_2=\Phi(v+d)]
\]
is some constant $\gamma$ independent from $d$.

For every coloring $\phi\in\Sigma^\Nodes$, define $1_{\phi(v)=\phi(v+d)}$ to be $1$ when $\phi(v)=\phi(v+d)$ and $0$ otherwise. Then define
\[
\beta_{\phi,d}:=\frac{1}{n}\sum_{v\in\Nodes} 1_{\phi(v)=\phi(v+d)}
\in \{0,\tfrac{1}{n},\tfrac{2}{n},\ldots,1\},
\] 
and note that $\beta_{\phi,d}$ is the probability that two randomly chosen nodes distance $d$ apart have the same color. Clearly $\beta_{\phi,d}=\beta_{\phi,n-d}$. 
Let us also define
\(
\beta_\phi:=(\beta_{\phi,2},\beta_{\phi,3}\ldots,\beta_{\phi,n-2}).
\)
We have
\(
\mathbb{E}[\beta_{\Phi}]=(\gamma,\gamma,\ldots,\gamma).
\)

For any probability distribution over distances $p=(p_2,p_3,\ldots,p_{n-2})$ we have
\[
\gamma = p\cdot \mathbb{E}[\beta_\Phi] = \mathbb{E}[p\cdot\beta_\Phi].
\]
Consider two such probability distributions $p,p'$ and a \emph{bias} $\Delta \ge 0$ such that $p\cdot\beta_\phi \ge \Delta + p' \cdot\beta_\phi$ for all proper colorings $\phi$.
Note that such $p,p',\Delta$ always exist, as $p=p'$ and $\Delta=0$ jointly satisfy the required condition. The interesting cases, however, are when $\Delta>0$. The maximum value of $\Delta$ can be computed by the linear program
\begin{subequations}
\label{eqn:LP}
\begin{alignat}{2}
& \text{maximize} & \qquad& \Delta \\
& \text{subject to} & & (p-p')\cdot \beta_\phi \ge \Delta \qquad \text{for all proper }\phi,
\end{alignat}
\end{subequations}
where the optimization is over $\Delta\ge0$ and all probability distributions $p$ and $p'$. Since $\beta_{\phi,d}=\beta_{\phi,n-d}$, for the sake of aesthetics, without loss of generality we impose $p_d=p_{n-d}$ and $p'_d=p'_{n-d}$.%
\footnote{The shapes of plots in Figure~\ref{fig:distr}, representing optimal solutions of (\ref{eqn:LP}), are smoother this way. We could alternatively impose $p_d=p'_d=0$ for $d>\lfloor n/2 \rfloor$, but then, for even $n$, the optimal solutions we found would have $p'_{n/2-1}\approx 2p_{n/2}$. We note that, however, when we implement (\ref{eqn:LP}) in Wolfram Mathematica, we do indeed assume $p_d=p'_d=0$ for $d>\lfloor n/2 \rfloor$, because that approximately halves the number of variables.}
Note that, whenever the maximum bias $\Delta$ is non-zero, probability distributions $p$ and $p'$ achieving it have disjoint supports.

We solved the linear program (\ref{eqn:LP}) for $n$ up to $22$ and we found the optimal $\Delta$ to be strictly positive for $n=11$ and $n=13,14,\ldots,22$. Before we elaborate on these solutions, let us see how a non-zero $\Delta$  implies upper bounds on the success probability.

 Let $-\Gamma\le 0$ be the minimum of $(p-p')\cdot \beta_\phi$ over improper colorings $\phi$, and let $\varepsilon_\Phi:=\Pr[\Phi\text{ is improper}]$.
We have
\begin{align}
\notag
0 & = \mathbb{E}[(p-p')\cdot\beta_\Phi] 
\\ \notag
 & = \sum_{\text{proper }\phi\in\Sigma^\Nodes}\Pr[\Phi=\phi]\,(p-p')\cdot\beta_\phi
+ \sum_{\text{improper }\phi\in\Sigma^\Nodes}\Pr[\Phi=\phi]\,(p-p')\cdot\beta_\phi
\\ \label{eq:DeltaGammaErr}
& \ge   (1-\varepsilon_\Phi)\Delta - \varepsilon_\Phi\Gamma.
\end{align}
Hence, since $\Gamma\le 1$, we get
\[
\varepsilon_\Phi\ge \frac{\Delta}{\Delta+\Gamma} \approx \frac{\Delta}{\Gamma}\ge \Delta,
\]
where the approximation assumes $\Delta\ll\Gamma$, which will be the case for the instances we consider.

\paragraph{Solving the linear program.}

We used Wolfram Mathematica to solve the linear program (\ref{eqn:LP}) exactly (i.e., not in the floating-point arithmetic, but symbolically). The code for this is relatively simple, only slightly building on top of the code given in Appendix~\ref{app:code11}. We have made the code available online~\cite{code3Color}.

To find the set of all distinct $\beta_\phi$, we run a brute force search over all proper $\phi\in\Sigma^n$. For example, for $n=11$, $n=16$, $n=21$, the number of such distinct $\beta_\phi$ is, respectively, 21, 410, 8336. 

Once we had found optimal $p,p'$, for values of $n$ up to $19$, we run the brute force search over all improper colorings $\phi$ to compute $\Gamma=-\min_\phi\,(p-p')\cdot\beta_\phi$ for them. Since there are approximately $2^n$ proper colorings and approximately $3^n$ improper colorings, finding $\Gamma$ is computationally slower than finding $\Delta$.

Figure~\ref{fig:bias} shows lower bounds on the error probability of $3$-coloring the ring using quantum single-round one-way distributed algorithms for $n=11,12,\ldots,22$. For $n=12$, this lower bound is $0$.

\begin{figure}[!h]
\begin{centering}
\includegraphics[width=8cm]{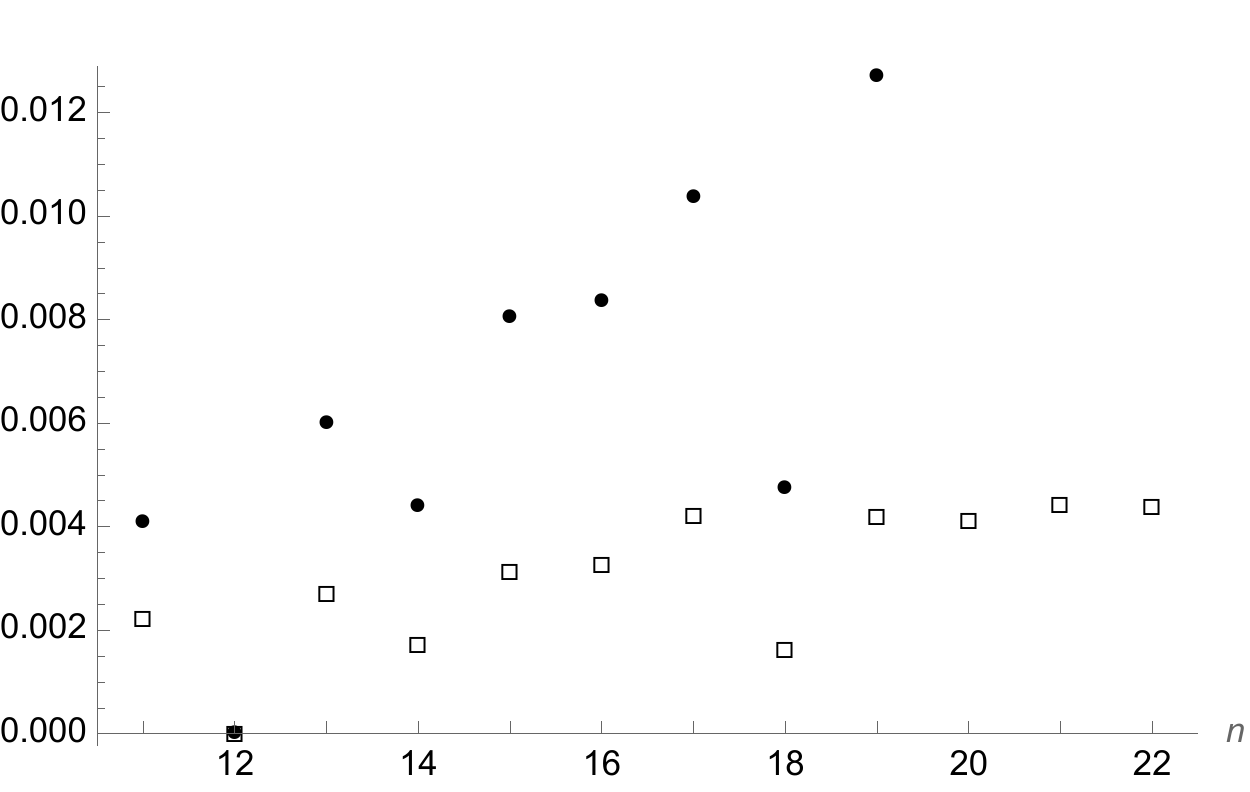}
\caption{\small
Maximum bias $\Delta$ for $n=11,12,\ldots,22$, represented by squares, as well as lower bounds $\frac{\Delta}{\Delta+\Gamma}$ on the error probability for $n$ up to $19$, represented by circles. (Since $\Gamma\le 1$, for $n=20,21,22$, we can take $\frac{\Delta}{\Delta+1}\approx\Delta$ as the lower bound on the error probability.)}
\label{fig:bias}
\end{centering}
\end{figure}

Figure~\ref{fig:distr} shows optimal solutions $p,p'$ of the linear program (\ref{eqn:LP}). We can see that $p$ is sharply concentrated around small distances (treating distances $d$ and $n-d$ as the minimum between them), while $p'$ is more flatly supported on long distances. In Figure~\ref{fig:distr} we have plotted $p,p'$ for $n=13,14,15,20,21,22$, and similarly shaped plots were observed for other values of $n$ between $11$ and $22$.

\begin{figure}[!h]
\begin{centering}
\includegraphics[width=15cm]{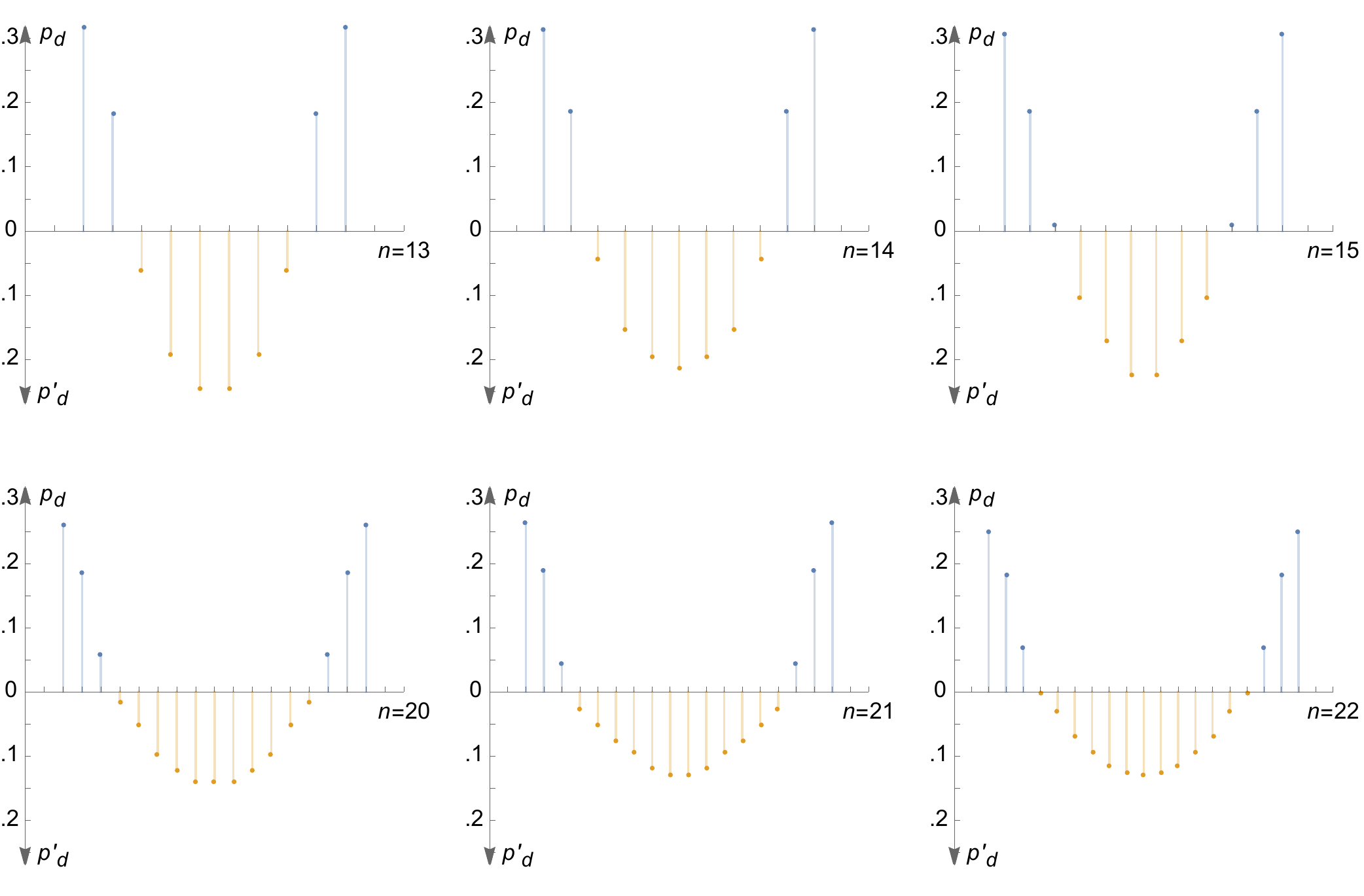}
\caption{\small
Optimal solutions $p,p'$ of the linear program (\ref{eqn:LP}) for $n=13,14,15,20,21,22$. Probability distributions $p,p'$ have disjoint supports, and we display only non-zero probabilities in them. The horizontal axis corresponds to the distance $d$. The upward-pointing vertical axis corresponds to $p_d$, while the downward-pointing vertical axis corresponds to $p'_d$.}
\label{fig:distr}
\end{centering}
\end{figure}

We note that, if a non-zero bias could be achieved by some $p$ and $p'$ that are supported on $\{r+1,r+2,\ldots,n-r-1\}$, then that would imply the impossibility of perfect 3-coloring in $r$ one-way rounds. However, we succeeded to find such $p$ and $p'$ only for $r=1$.

\subsection{Exponential bounds via Lemma~\ref{lem:SegmentsToExp}}
\label{sec:ExpInstance}

In a similarly fashion to the ring in Section~\ref{sec:boundsN11N22}, let us now consider a line segment $\fixFr{w}{w+k-1}$ of $k$ nodes. For notational convenience, let $w=1$. Let $\Psi$ be a random coloring of this segment that is independent beyond distance $1$. As before, let us consider the collection of frames $F=(\fr{1},\fr{1})$. There are $(k-1)(k-2)$ gap-$1$ placements $\omega=(u,v)$ of $F$ that respect  $\fixFr{1}{k}$, but, due to symmetry, it suffices to consider only the half of them that have $u<v$.

Similarly as before, for $u,v\in\{1,2,\ldots,k\}$ such that $v\ge u+2$ and a coloring $\psi\in\Sigma^k$ of the line segment, let $\beta_{\psi,u,v}=1$ if $\psi(u)=\psi(v)$, and $\beta_{\psi,u,v}=0$ otherwise, and we extend this definition to random $\Psi$.
Since $\Psi$ is non-signaling beyond distance $1$, $\Pr[\Psi(u)=\Psi(v)]$ is the same for all non-adjacent $u,v$, and let us denote this probability by $\gamma$. Let us consider the vectors $\beta_\psi:=(\beta_{\psi,u,v})_{u,v}$ and probability distributions $p:=(p_{u,v})_{u,v}$ over non-adjacent nodes expressed as vectors, where the indexing is done in the same consistent manner. We have
\(
\mathbb{E}[\beta_\Psi] = (\gamma,\gamma,\ldots,\gamma),
\)
and thus 
\(
\mathbb{E}[(p-p')\cdot\beta_\Psi] = 0
\).

Similarly as before, we define the maximum bias $\Delta$ as the optimal value of the linear program
\begin{subequations}
\label{eqn:LPsegment}
\begin{alignat}{2}
& \text{maximize} & \qquad& \Delta \\
& \text{subject to} & & (p-p')\cdot \beta_\psi \ge \Delta \qquad \text{for all proper }\psi.
\end{alignat}
\end{subequations}
We note that solving the linear program (\ref{eqn:LPsegment}) is more costly than (\ref{eqn:LP}), given the number of nodes, $k$ and $n$ respectively, are equal. That is because (\ref{eqn:LPsegment}) does not exploit symmetries due to the cyclicity of the ring. In particular, up to color permutations, one can recover $\psi$ from $\beta_\psi$.

We solved the linear program (\ref{eqn:LPsegment}) for increasing large values of $k$, and the smallest $k$ for which the optimum $\Delta$ is strictly positive is $k=15$. For $k=15$, the maximum bias is 
\[
\Delta = \frac{569\,800\,825}{2\,362\,818\,191\,739} \approx 2.412\cdot 10^{-4}.
\]
An optimal solution $(p,p')$ corresponding to this $\Delta$ is illustrated in Figure~\ref{fig:ppGrid}. As in Figure~\ref{fig:distr} for the whole ring, we see that $p$ is supported on pairs $(u,v)$ with shorter distances between them, while $p'$ on pairs $(u,v)$ with larger distances between them.

\begin{figure}[!h]
\begin{centering}
\includegraphics[width=11cm]{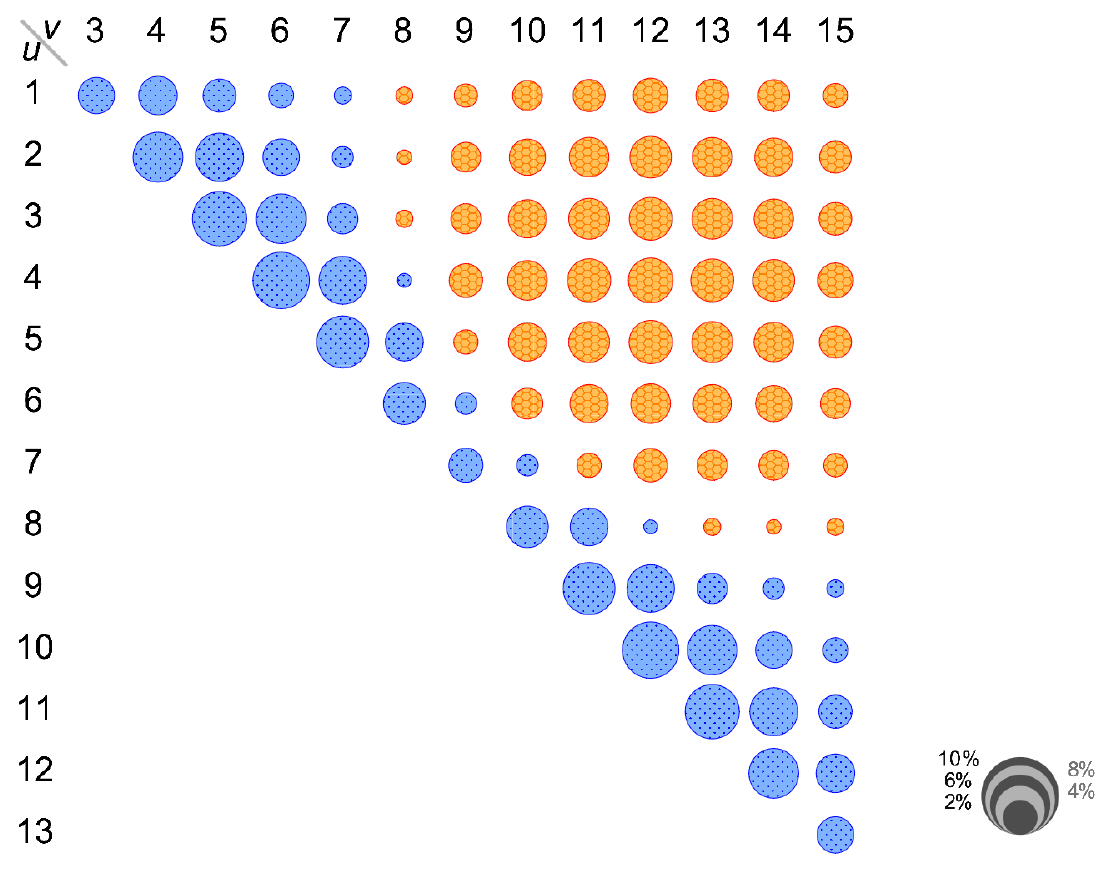}
\caption{\small
An optimal solution of (\ref{eqn:LPsegment}) for the $15$-node line segment. The blue dotted discs indicate non-zero $p_{u,v}$ and the orange honeycombed discs indicate non-zero $p'_{u,v}$. The probabilities are proportional to the area of the disc, the scale being indicated by the five overlapping discs in gray.
}
\label{fig:ppGrid}
\end{centering}
\end{figure}

Then, for this solution $(p,p')$, we computed the minimum $-\Gamma$ of $(p-p')\cdot\beta_\psi$ over all improper $\psi$, and we found
\[
\Gamma = \frac%
{218\,333\,768\,903\,290\,121}%
{655\,639\,517\,480\,121\,198}
\approx 0.3330
\]
(note that, since the solution $(p,p')$ is not necessarily unique, neither is $\Gamma$).
Just like the argument based on the inequality (\ref{eq:DeltaGammaErr}), this means that the success probability of $\Psi$ is at most $\frac{\Gamma}{\Delta+\Gamma}<\frac{1382}{1383}$. This together with Lemma~\ref{lem:SegmentsToExp} proves Theorem~\ref{thm:expBoundSimple}.

\section{Code for Witnessing Impossibility of Perfect Coloring}
\label{app:code11}

Here we consider the ring of $n=11$ nodes, and we present a Wolfram Mathematica code that verifies that, according to arguments in Section~\ref{sec:boundsN11N22}, it is impossible for a quantum single-round one-way distributed algorithm to perfectly $3$-color the ring.

Consider probability distributions $p=(\frac{30}{41},\frac{11}{41},0,0)$ and $p'=(0,0,\frac{14}{41},\frac{27}{41})$ over distances $2,3,4,5$, respectively. As in Section~\ref{sec:boundsN11N22}, for a coloring $\phi\in\Sigma^\Nodes$, let 
$\beta_\phi:=(\beta_{\phi,2},\beta_{\phi,3},\beta_{\phi,4},\beta_{\phi,5})$, where
$\beta_{\phi,d}:=|\{v\in\Nodes\colon\phi(v)=\phi(v+d)\}|/n$.
The following code returns $\texttt{minDelta}$ of $\frac{1}{451}$, showing that $(p-p')\cdot\beta_\phi\ge \frac{1}{451}$ for all proper colorings $\phi$.

\begin{Verbatim}[numbers=left,xleftmargin=6.5mm]
n = 11;
pp = {30, 11, -14, -27}/41;
minDelta = 1;
For[i = 0, i < 2^(n - 2), i++, 
  steps = 1 + IntegerDigits[i, 2, n - 2];
  phi = {1, 0}~Join~Mod[Total[Take[steps, #]] & /@ Range[n - 2], 3];
  If[Last[phi] == 1, Continue[]];
  betaN = {};
  For[d = 2, d <= Floor[n/2], d++, betaNd = 0;
   For[v = 1, v <= n, v++, 
    betaNd += Boole[phi[[v]] == phi[[Mod[v + d, n, 1]]]]];
   AppendTo[betaN, betaNd];];
  If[pp . betaN < n minDelta, minDelta = pp . betaN/n];];
Print[minDelta]
\end{Verbatim}

Line 1 of the code is self-explanatory. In line 2, we define constant \texttt{pp} as $p-p'$. In line 3, we introduce variable \texttt{minDelta} that will track the minimum $(p-p')\cdot\beta_\phi$ found so far, which is clearly no more than $1$. We consider the set of colors $\Sigma=\{0,1,2\}$. The \texttt{For} loop of lines 4--13 goes over all proper colorings of the ring that, without loss of generality, colors nodes $v=0$ and $v=1$ with colors $1$ and $0$, respectively. We note that not all counters $i$ correspond to proper colorings as about a third of them are discarded in line 7, which tests if $\phi(n-1)=1=\phi(0)$.

In line 5, we represent $i$ as an $n-2$ bit string, and then we add $1$ to all bits, getting a string $\mathtt{steps}\in\{1,2\}^{n-2}$. After that, in line 6, we effectively assign $\phi(j+1):=\phi(j)+\mathtt{steps}_j\mod 3$ for all $j$ from $1$ to $n-2$. Thus $\phi$ is a proper coloring aside from, potentially, having $\phi(n-1)=\phi(0)$, which we check in line 7.

In lines 8--12, we compute $\beta_\phi n$ in the variable \texttt{betaN}. Here \texttt{betaNd} contains $\beta_{\phi,d}n=|\{v\in\Nodes\colon\phi(v)=\phi(v+d)\}|$, which is simply counted in line 11. Note that Wolfram Mathematica indexes lists starting with index 1.
Finally, in line 13, we update the minimum $(p-p')\cdot\beta_\phi$ if necessary, and print the result in line 14.

\bigskip

To implement the linear program (\ref{eqn:LP}), one only has to slightly extend the code above. That code as well as all the other code that we have used for our computations is made available online~\cite{code3Color}.

\end{document}